\begin{document}

%\markboth{Authors' Names}
%{Instructions for Typing Camera-Ready Manuscripts (Paper's Title)}
\markboth{Fukuda et al.}
{Polyominoes and Polyiamonds as Fundamental Domains of Isohedral Tilings
%with Rotational Symmetry
}

\catchline

\title{
POLYOMINOES AND POLYIAMONDS AS FUNDAMENTAL DOMAINS OF ISOHEDRAL TILINGS
WITH ROTATIONAL SYMMETRY
(Submitted to International Journal of Computational Geometry and Applications (IJCGA))
%INSTRUCTIONS FOR TYPESETTING CAMERA-READY \\
%MANUSCRIPTS USING \TeX\ OR \LaTeX
%\footnote{For the title, try not to use 
%more than 3 lines. Typeset the title in 10 pt roman, uppercase and
%boldface.}
}

\author{HIROSHI FUKUDA
%\footnote{
%Typeset names in 8 pt roman, uppercase. Use the footnote to indicate the
%present or permanent address of the author.
%}
}

\address{
College of Liberal Arts and Sciences, Kitasato University, 1-15-1 Kitasato, Sagamihara,
Kanagawa 252-0373, JAPAN\\
fukuda@kitasato-u.ac.jp
%University Department, University Name, Address\\
%City, State ZIP/Zone,Country\,\footnote{State completely without 
%abbreviations, the affiliation and mailing address, 
%including country. Typeset in 8 pt italic.}\\
%e-mail address
}

\author{CHIAKI KANOMATA}

\address{
School of Administration and Informatics, University of Shizuoka, 52-1 Yada, Shizuoka 422-8526 JAPAN
%Group, Laboratory, Address\\
%City, State ZIP/Zone, Country\\
%e-mail address
}

\author{NOBUAKI MUTOH}

\address{
School of Administration and Informatics, University of Shizuoka, 52-1 Yada, Shizuoka 422-8526 JAPAN\\
muto@u-shizuoka-ken.ac.jp
%Group, Laboratory, Address\\
%City, State ZIP/Zone, Country\\
%e-mail address
}

\author{GISAKU NAKAMURA}

\address{
School of Administration and Informatics, University of Shizuoka, 52-1 Yada, Shizuoka 422-8526 JAPAN
%Group, Laboratory, Address\\
%City, State ZIP/Zone, Country\\
%e-mail address
}

\author{DORIS SCHATTSCHNEIDER}

\address{
Mathematics Dept. PPHAC Moravian College, 1200 Main St. Bethlehem, PA 18018-6650\\
schattdo@moravian.edu
%Group, Laboratory, Address\\
%City, State ZIP/Zone, Country\\
%e-mail address
}

\maketitle

\pub{Received (received date)}{Revised (revised date)}{Communicated by (Name)}

\begin{abstract}
We describe computer algorithms that produce the complete set of isohedral tilings by
$n$-omino or $n$-iamond tiles in which the tiles are fundamental domains and the tilings have 3-, 4-,
or 6-fold rotational symmetry. 
The symmetry groups of such tilings are of types {\bf p3}, {\bf p31m}, {\bf p4},
{\bf p4g}, and {\bf p6}. 
There are no isohedral tilings with {\bf p3m1}, {\bf p4m}, or {\bf p6m} symmetry groups that have
polyominoes or polyiamonds as fundamental domains. 
We display the algorithms' output and give
enumeration tables for small values of $n$. 
This expands earlier works.\cite{fukuda2006,fukuda2008} %(Fukuda et al 2006, 2008).
%
%The abstract should summarize the context, content
%and conclusions of the paper in less than 200 words. It should
%not contain any references or displayed equations. Typeset the
%abstract in 8 pt roman with baselineskip of 10 pt, making
%an indentation of 1.5 pica on the left and right margins.
%
\keywords{polyomino; polyiamond; isohedral tiling.}
\end{abstract}

\section{Introduction}	%) A SECTION HEADING
\label{intro}
Polyominoes and polyiamonds and their tiling properties have been the subject of 
computational geometry research that investigated 
which polyominoes can tile the plane isohedrally and 
which can tile by translations alone. \cite{rhoads,Kev}. 
% and recreation for more than 50 years. 
In earlier papers\cite{fukuda2006,fukuda2008}, %(Fukuda et al 2006, 2008), 
we gave algorithms to create polyominoes and
polyiamonds that were fundamental domains for isohedral tilings having {\bf p3}, {\bf p4}, or {\bf p6} symmetry
groups. 
In this paper, we consider the expanded task of producing polyomino and polyiamond
tiles that generate isohedral tilings of types {\bf p3}, {\bf p3m1}, {\bf p31m}, {\bf p4}, {\bf p4m}, {\bf p4g}, {\bf p6} 
or {\bf p6m} and for
which the tiles are fundamental domains of the tiling.
Recently an extension of this study was published as a separate paper\cite{fukuda2011} 
in which we carried out these investigations for symmetry groups of
%In our next paper, we will continue these investigations for symmetry groups of 
types {\bf pmm}, {\bf pmg}, {\bf pgg}, and {\bf cmm}.

A polyomino (or $n$-omino) is a tile homeomorphic to a disk, made up of $n$ unit squares that
are connected at their edges; that is, the intersection of two unit squares in the polyomino is
either empty or an edge of both squares. 
Similarly a polyiamond (or $n$-iamond) is a tile
homeomorphic to a disk, made up of $n$ unit equilateral triangles that are connected at their edges;
the intersection of two unit triangles in the polyiamond is either empty or an edge of both
triangles.

An isohedral tiling of the plane is a tiling by congruent tiles in which the symmetry group of
the tiling acts transitively on the tiles. A fundamental domain for an isohedral tiling is a region
of least area that generates the whole tiling when acted on by the symmetry group of the tiling.
A fundamental domain for an isohedral tiling cannot contain two points that are identical under
the action of the symmetry group of the tiling. 
This leads immediately to the following observations:

%\noindent
%{\bf Observations}
%\begin{enumerate}
\begin{romanlist}
\item 
{\it In an isohedral tiling in which each tile is a fundamental domain, no tile can contain a
rotation center or axis of reflection for the whole tiling except on its boundary.}
\item 
{\it In an isohedral tiling having a polyomino as fundamental domain, a 4-fold center of the tiling
can only occur at a vertex of a square that is a ``corner" of the polyomino; 
that is, only one unit square of the polyomino contains that vertex.}
\item
{\it In an isohedral tiling having a polyiamond as fundamental domain, 3-fold or 6-fold centers of
the tiling can only occur at vertices of unit triangles on the boundary of the polyiamond.
Moreover, a 3-fold center can only occur at a vertex at which exactly two unit triangles meet. A
6-fold vertex can only occur at a vertex that is a ``corner" of the tile, that is, exactly one unit
triangle of the polyiamond contains that vertex.}
\item
{\it There are no 3-or 6-fold centers in a tiling by polyominoes. 
Hence, there are no \mbox{\bf p3}, \mbox{\bf p3m1},
\mbox{\bf p31m}, \mbox{\bf p6} or \mbox{\bf p6m} isohedral tilings by polyominoes. 
There are no 4-fold centers in a tiling by
polyiamonds. Hence, there are no \mbox{\bf p4}, \mbox{\bf p4m}, or \mbox{\bf p4g} isohedral tilings by polyiamonds. }
%\end{enumerate}
\end{romanlist}

In each of the sections that follow, we begin with a fixed lattice of symmetry elements for a
symmetry group $G$ (that is, a fixed array of rotation centers, reflection axes and glide-reflection
axes for elements of $G$) and give a backtracking procedure to produce a complete set of
polyominoes (or polyiamonds) that are fundamental domains for $G$. 
A tile $T$ is {\it a fundamental domain for a symmetry group $G$} 
if the action of $G$ on $T$ produces an isohedral tiling and $T$ is a
region of minimal area for which $G$ can generate that tiling. $G$ will be contained in (or equal to)
the full symmetry group of the tiling.

As we consider symmetries of our isohedral tilings, the following theorem will be useful. \cite{fukuda2011}

\begin{theorem}
\label{th1}
Let $G$ be one of the 17 two-dimensional symmetry groups and ${\cal T}$ an isohedral tiling
generated by $G$ acting on a tile $T$ that is a fundamental domain for $G$. 
Let $G'$ be the full
symmetry group of ${\cal T}$. 
If $G$ is a proper subgroup of $G'$, there is element of $G'$ (other than the
identity) that leaves $T$ invariant. In this case, a fundamental domain for $G'$ has area smaller
than $T$.
\end{theorem}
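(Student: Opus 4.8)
The plan is to exploit the fact that $T$, being a fundamental domain for $G$, is moved to a \emph{different} tile by every nontrivial element of $G$, while an element of $G'$ outside $G$ must nonetheless send $T$ to one of the tiles of ${\cal T}$. First I would record two basic facts. Since ${\cal T}$ is the $G$-orbit of $T$ and is isohedral, $G$ acts transitively on the tiles of ${\cal T}$; and since $T$ is a fundamental domain for $G$, no nonidentity element of $G$ can leave $T$ invariant (otherwise $T$ would contain two distinct points equivalent under $G$, which a fundamental domain cannot do). Equivalently, the stabilizer $\mathrm{Stab}_G(T)$ is trivial and $g \mapsto gT$ is a bijection from $G$ onto the tiles.

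For the first assertion I would argue directly. Choose $h \in G' \setminus G$, which exists because $G$ is a proper subgroup. As $h$ is a symmetry of ${\cal T}$, the image $hT$ is again a tile of ${\cal T}$, so by transitivity of $G$ there is $g \in G$ with $hT = gT$. Then $s := g^{-1}h$ lies in $G'$ and satisfies $sT = T$. Moreover $s \neq \mathrm{id}$, for $s = \mathrm{id}$ would give $h = g \in G$, contradicting the choice of $h$. Thus $s$ is a nonidentity element of $G'$ leaving $T$ invariant.

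For the area statement I would pass to the stabilizer $S := \mathrm{Stab}_{G'}(T)$, which contains the element $s$ just produced and is finite because it is the point group fixing the bounded tile $T$. The key computation is to identify the index: transitivity of $G$ on the tiles gives $G' = GS$, while $\mathrm{Stab}_G(T)=G\cap S=\{\mathrm{id}\}$; a short transversal argument then yields $[G':G] = |S| > 1$. Since the area of a fundamental domain of a wallpaper group scales with the index of a subgroup, and since $T$ realizes the fundamental-domain area of $G$, a fundamental domain for $G'$ has area $\mathrm{area}(T)/[G':G] = \mathrm{area}(T)/|S| < \mathrm{area}(T)$. Concretely, one may take a fundamental domain of the finite group $S$ acting on $T$: its $G'$-orbit still tiles the plane, but its area is only $\mathrm{area}(T)/|S|$.

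The main obstacle I anticipate is the bookkeeping in the last paragraph: carefully justifying that $[G':G]$ is finite and equal to $|S|$, and that this group-theoretic index translates exactly into the ratio of fundamental-domain areas. The first assertion of the theorem is essentially formal once transitivity and the trivial stabilizer are in hand; the second requires the covolume relation (a fundamental domain of $G$ is an $[G':G]$-fold union of copies of a fundamental domain of $G'$) together with the identification of $\mathrm{area}(T)$ with the fundamental-domain area of $G$, and care that the candidate smaller region really is a legitimate fundamental domain (a nonoverlapping orbit of full area) rather than merely a set whose orbit covers the plane.
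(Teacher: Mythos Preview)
Your argument is correct. The orbit--stabilizer maneuver (pick $h\in G'\setminus G$, use transitivity of $G$ on the tiles to write $hT=gT$, and conclude that $s=g^{-1}h$ is a nontrivial element of $G'$ fixing $T$) is exactly the right mechanism, and your index computation $[G':G]=|S|$ via $G'=GS$ with $G\cap S=\{\mathrm{id}\}$ is sound; the covolume relation then gives the strict area inequality. The caveats you flag are genuine but routine: $S$ is finite because a symmetry of the plane fixing a bounded tile setwise must fix its centroid, hence lies in a finite cyclic or dihedral group; and the index--area relation is the standard fact that a fundamental domain for $G$ decomposes into $[G':G]$ copies of a fundamental domain for $G'$.

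As for comparison with the paper: note that the paper does not actually supply a proof of this theorem. It is quoted from the authors' companion article (reference~\cite{fukuda2011}) and used as a black box in the later arguments (Theorems~\ref{th2} and its analogues). So there is no in-paper proof to compare against; your self-contained argument would serve perfectly well in its place.
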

%
%\begin{proof}
%Since $T$ is a fundamental domain for $G$, the only element of $G$ that leaves $T$ invariant is
%the identity. Let $g'$ be an element of $G'$ that is not in $G$. The tile $g' T$ is in ${\cal T}$, 
%and since $G$ acts transitively on the tiles of ${\cal T}$, 
%there is an element $g$ in $G$ such that $gT = g'T$. 
%But then $g^{-1}g'T = T$,
%and $g^{-1}g'$ is not the identity. The element $g^{-1}g'$ is a rotation or reflection in $G'$ that divides $T$ into
%two or more congruent parts that are permuted under its action. 
%Thus a fundamental domain for $G'$ is contained in (or equal to) one of these parts, 
%and has area smaller than $T$. 
%\end{proof}

\section{{\bf p4}}
\label{p4}
\subsection{Creating polyominoes as fundamental domains for {\bf p4} symmetry groups}
\label{p4:create}
We begin with a lattice of unit squares as shown in Fig.~\ref{fig:p4lattice}. 
Our $n$-omino tiles that are fundamental domains for a {\bf p4} group will be built from these unit squares. 
By observation (ii) in section \ref{intro}, 
the 4-fold rotation centers for a {\bf p4} symmetry group must be located at lattice points.
So, first we place a 4-fold rotation center, (a black circle) at a lattice point and call this the origin.
We then place orthogonal unit vectors $\mathbf u$ and $\mathbf v$ at the origin (see Fig.~\ref{fig:p4lattice}). 
Next we place a second
4-fold rotation center, the white circle in Fig.~\ref{fig:p4lattice}, at $x{\mathbf u} + y{\mathbf v}$, 
where $x$ and $y$ are nonnegative
integers, not both $0$. These two choices of 4-fold rotation centers determine a whole {\bf p4} lattice of
rotation centers; rotations about the two placed black and white centers generate a {\bf p4} group $G$.
\begin{figure}[h]
\centerline{\psfig{file=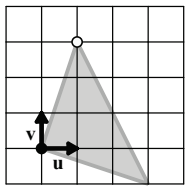,width=2.7cm}}
%\vspace*{8pt}
\caption{
A lattice of unit squares with black and white 4-fold rotation centers that generate a {\bf p4} group. 
The shaded isosceles right triangle is a standard fundamental domain; 
its unmarked third vertex is a 4-fold center equivalent to
the white center. Here $x = 1$, $y = 3$; the area of the fundamental domain is $5$ sq. units.
\label{fig:p4lattice}
}
\end{figure}

The area $S$ of a fundamental domain for this {\bf p4} group $G$ is given by
\begin{equation}
S=\frac{x^2+y^2}{2}
\end{equation}
taking the area of a unit square as $1$. 
Since we want our $n$-omino to be a fundamental domain, $n = S$ and $S$ must be an integer. 
Therefore, $x$ and $y$ must have the same parity, and
\begin{equation}
n=1, 2, 4, 5, 8, 9, 10, 13, 16, 17, 18, 20, 25,	25,	26,	29,	32,	34,	36,	37,	40,	41,	45,	49,	
\ldots
\end{equation}
where two pairs $(x,y)$, namely, $(5,5)$ and $(7,1)$ correspond to $n = 25$.
All unit squares in the lattice
are classified into $n$ equivalence classes by the action of the {\bf p4} group $G$. 
We denote the equivalence class of a unit square $e$ as $C(e)$.

We construct a set $\mathscr{T}_n$ of $n$-ominoes
that are fundamental domains for the {\bf p4} group $G$ and a
given $n$ by following Procedure 1 below, using these definitions:

\begin{enumerate}
\item
$T$ is a set of unit squares; $B(T)$ is a set of unit squares that are edge-adjacent to the squares
in $T$; $\mathscr{T}_n$ is a set of $n$-ominoes.
\item
When $T$ is the empty set, we define $B(\emptyset)$ as the set of four squares around the origin (the
four unit squares at the lower left in Fig.~\ref{fig:p4lattice}.)
\item
We define $E(T)$, the boolean function of $T$, which is true if $\# T = n$ and the white circle is on
the boundary of $T$. Otherwise $E(T)$ is false.
\item
$ B'(T)=\{e| e \in B(T), C(e)\ne C(f), \mbox{for all} f \in T \} $. 
This is the set of all unit squares that are
edge-adjacent to those in $T$, but not equivalent to any unit squares in $T$.
\end{enumerate}

Procedure 1 creates a sequence of pairs of sets $(T, U_T)$, 
in which $U_T$ is the set of unit squares
that can be added to $T$ to create the next set $T$ in the sequence. 
The sets $T$ are built up, one
element at a time, until a set $T$ is achieved for which $E(T)$ is true, 
at which time $T$ is added to the
collection $\mathscr{T}_n$. 
When this takes place, the procedure backtracks to the most recent previous pair
$(T, U_T)$ for which $U_T \ne \emptyset$, and repeats the process.

\vspace{1em}
\noindent
{\bf Procedure 1.}
\begin{enumerate}
\item
Fix $n$ (from the list in (2) above). Begin with $\mathscr{T}_n$ empty.
\item
Make $T$ empty. Make $U_T = B'(T) = B(\emptyset)$. Make $k = 0$. Make $S_k = \{(T, U_T)\}$.
\item
For $(T, U_T)$ in $S_k$, if $U_T \ne \emptyset$, 
choose an element $e$ (a unit square) of $U_T$. Remove $e$ from $U_T$
and save the pair $(T, U_T)$ in $S_k$.
\item
Increase $k$ by $1$. Add $e$ (from step 3) to $T$ to create a new $k+1$-omino $T$, 
and let $U_T = B'(T)$ for the new $T$. Add the new pair $(T, U_T)$ to $S_k$.
\item
If $T = \emptyset$ and $U_T = \emptyset$, quit the procedure.
\item
If $E(T)$ is true, an $n$-omino tile is completed and add $T$ to $\mathscr{T}_n$ provided there is no equivalent
tile in $\mathscr{T}_n$. We regard two $n$-ominoes equivalent if conditions (a) and (b) below are satisfied.
\begin{enumerate}
\item
The tiles are congruent (including mirror reflection).
\label{proc1a}
\item
When the tiles are superimposed, the positions of $m$-fold rotational centers on the
boundaries of the tiles are the same. 
If there are several $m$-fold rotation centers for the same $m$, 
they can be permuted appropriately before comparison.
\label{proc1b}
\end{enumerate}
\item
If $U_T = \emptyset$, decrease $k$ by $1$.
\item
Go back to step 3.
\end{enumerate}

Fig.~\ref{p4fighi} shows the set of inequivalent $n$-ominoes in $\mathscr{T}_n$
for $n \le 8$. 
From each of these $n$-ominoes
we can obtain the associated {\bf p4} tiling by using the black and white circles as 4-fold rotation
centers. 
Fig.~\ref{p4tiling} shows the corresponding isohedral tilings produced by n-ominoes in Fig.~\ref{p4fighi} for $n \le 5$.
\begin{figure}[h]
\centerline{\psfig{file=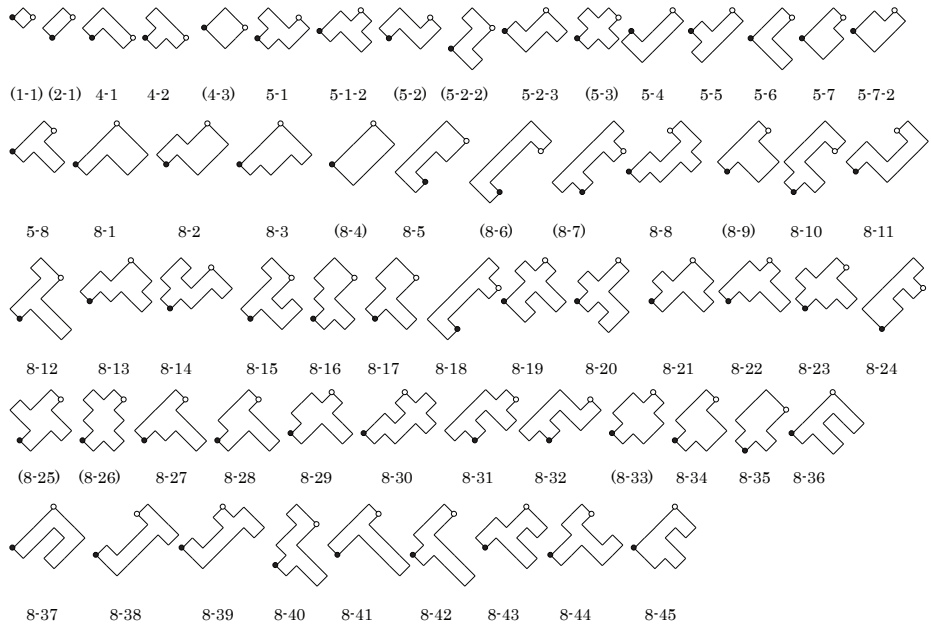,width=\linewidth}}
%\vspace*{8pt}
\caption{
List of $n$-ominoes produced by the procedure in section 2.1 for $n \le 8$. 
The tiles are fundamental domains for
the {\bf p4} group used to construct them. 
The labels indicate $n$ followed by the tile number for that $n$. Parentheses
indicate that the tiles produce tilings having more symmetries than the {\bf p4} group that generates the tilings.
\label{p4fighi}
}
\end{figure}
\begin{figure}[h]
\centerline{\psfig{file=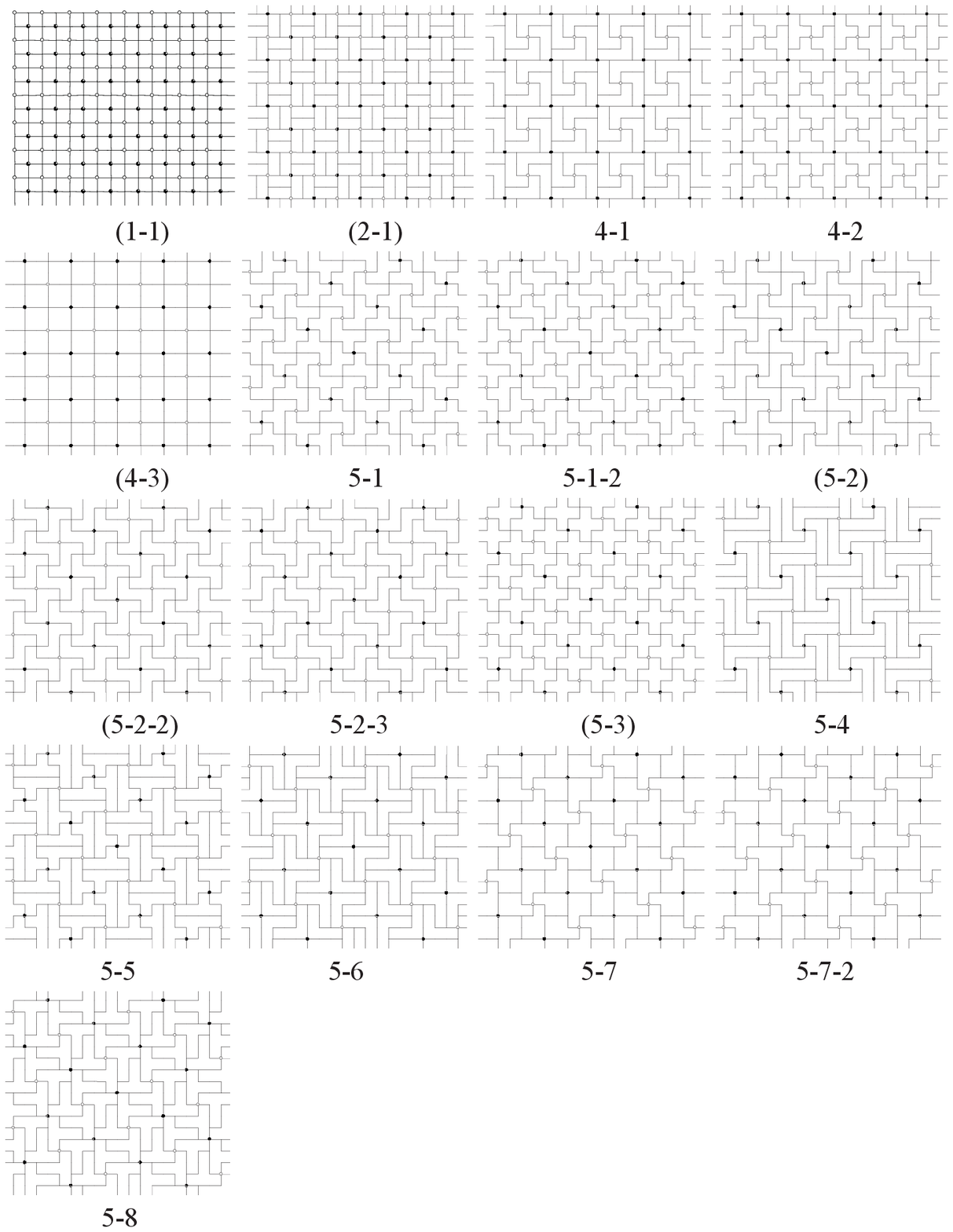,width=0.9\linewidth}}
%\vspace*{8pt}
\caption{
List of {\bf p4} isohedral tilings by $n$-ominoes in Fig.~\ref{p4fighi}, for $n \le 5$, 
generated by a given {\bf p4} group. Labels
correspond to those in Fig.~\ref{p4fighi}. 
The symmetry group of each tiling is the given {\bf p4} group, 
except for tilings whose labels are in parentheses.
\label{p4tiling}
}
\end{figure}

\subsection{Symmetries of tiles}
The list of isohedral tilings corresponding to tiles in Fig.~\ref{p4fighi} includes some tilings having
symmetry group larger than the {\bf p4} group $G$ generated by 4-fold rotations about the black
and white rotation centers. By Theorem~\ref{th1}, in every case where this occurs, the $n$-omino
that generates the tiling must have reflection and/or rotation symmetry.

We outline below how to identify polyominoes whose corresponding isohedral tilings
have symmetry groups larger than the group $G$ that generated the tiling. 
Such tiles in Fig.~\ref{p4fighi} and their tilings in Fig.~\ref{p4tiling} are identified by 
parentheses around their labels.
\begin{itemlist}
\item
Select a polyomino that has rotation and/or reflection symmetry, and examine its
tiling $\cal T$ generated by $G$.
\item
Look at all vertices and centers of unit squares in a polyomino in $\cal T$ except for the
original 4-fold centers we have chosen (black and white centers), and determine whether
or not $\cal T$ is invariant under a 4-fold rotation about such a point. 
If so, then it is a new 4-fold center for $\cal T$, and this symmetry is not in $G$. 
For example, in Fig.~\ref{p4tiling}, tilings (5-2-2)
and (5-2-3), while generated differently, are the same tiling and have 4-fold centers at
every vertex where 4 tiles meet. 
Tiling (5-3) has 4-fold rotation centers at the centers of
the ``cross" tiles and at every vertex where 4 tiles meet. 
These tilings have full symmetry groups of type {\bf p4}, with $G$ as a proper subgroup.
\item
Other new symmetry elements of $\cal T$ can be sought by using chart 2 in\cite{schatt}. %(Schattschneider 1978). 
If the line joining the black and white 4-fold centers we placed is
an axis of mirror reflection for $\cal T$, then $\cal T$ is type {\bf p4m}. 
For example, in Fig.~\ref{p4tiling}, tilings (1-1)
and (4-3) are type {\bf p4m}. 
If this line is not a mirror reflection axis, but the line connecting
two adjacent (nearest) 2-fold centers for $\cal T$ is a mirror reflection axis for $\cal T$, 
then $\cal T$ is type {\bf p4g}. 
For example, in Fig.~\ref{p4tiling}, tiling (2-1) is type {\bf p4g}.
\end{itemlist}

In sections~\ref{p4g} and \ref{p4m} that follow, 
we indicate how to modify our Procedure 1 so as to
produce directly polyominoes that are fundamental domains for {\bf p4g} or {\bf p4m} symmetry
groups.

\section{{\bf p4g}}
\label{p4g}
\subsection{Creating polyominoes as fundamental domains for {\bf p4g} symmetry groups}
\label{p4gcre}
A {\bf p4g} symmetry group contains 4-fold rotations, reflections, and glide-reflections; the
subgroup generated by its 4-fold rotations is type {\bf p4}. 
Thus we begin as in section~\ref{p4},
with a lattice of unit squares. 
Following observation (ii) in section~\ref{intro}, 
we place a 4-fold rotation center (a black circle) at a lattice point and call this the origin. 
Orthogonal unit vectors $\mathbf u$ and $\mathbf v$ are then placed at the origin (see Fig.~\ref{fig:p4glattice}).
\begin{figure}[h]
\centerline{\psfig{file=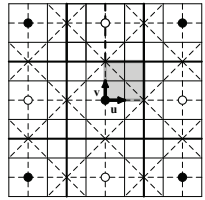,width=2.75cm}}
%\vspace*{8pt}
\caption{
A lattice of unit squares with symmetry elements for a {\bf p4g} group. 
Black and white circles are 4-fold rotation centers, thick solid lines are reflection axes, 
and dashed lines are glide-reflection axes. 2-fold
rotation centers are at the intersections of reflection axes. 
4-fold rotations about the center black circle (the
origin) and reflections about one axis nearest the origin generate the {\bf p4g} group. 
A standard fundamental
domain is shaded. Here $x = 2$; the area of a fundamental domain for the {\bf p4g} group is 4.
\label{fig:p4glattice}
}
\end{figure}

Next, we place the reflection axes that are nearest to the origin; 
these must lie along the edges of unit squares according to observation (i) in section~\ref{intro}. 
These axes are vertical and horizontal lines at $x{\mathbf u}$ and at $x{\mathbf v}$, respectively, 
where $x$ is a positive integer. 
The placement of the origin and choice of $x$ determine the whole {\bf p4g} lattice of rotation
centers, reflection axes and glide-reflection axes since the {\bf p4g} group $G$ is generated 
by 4-fold rotations about the origin and reflections in one of the reflection axes nearest the
origin.

The area $S$ of a fundamental domain for the group $G$ is given by
\begin{equation}
S=x^2
\end{equation}  
where the area of a unit square is 1. Since we want our $n$-omino to be a fundamental
domain, $n = S$. Therefore,
\begin{equation}
n=1, 4, 9, 16, 25, 36, 49, 64, \ldots.
\label{p4g:n}
\end{equation}  

Action by the group $G$ partitions the unit squares into $n$ equivalence classes; as before, we
denote the equivalence class of a unit square $e$ as $C(e)$. 
We construct the set $\mathscr{T}_n$ of $n$-ominoes
that are fundamental domains for $G$ using Procedure 1 in section~\ref{p4}, 
with $n$ chosen from the list in (\ref{p4g:n}) and the additional constraint that unit squares in the set $T$ must
be in the region bounded by the mirror reflection axes nearest to the origin.

Fig.~\ref{p4gfighi} shows the set of inequivalent tiles in $\mathscr{T}_n$ for $n \le 9$. 
We can obtain the associated {\bf p4g} tilings by using the origin as a 4-fold rotation center to fill out the square
bounded by the reflection axes nearest the origin, then reflecting this square in its edges.
Fig.~\ref{p4gtiling} shows the corresponding isohedral tilings produced by polyominoes in Fig.~\ref{p4gfighi}.
\begin{figure}[h]
\centerline{\psfig{file=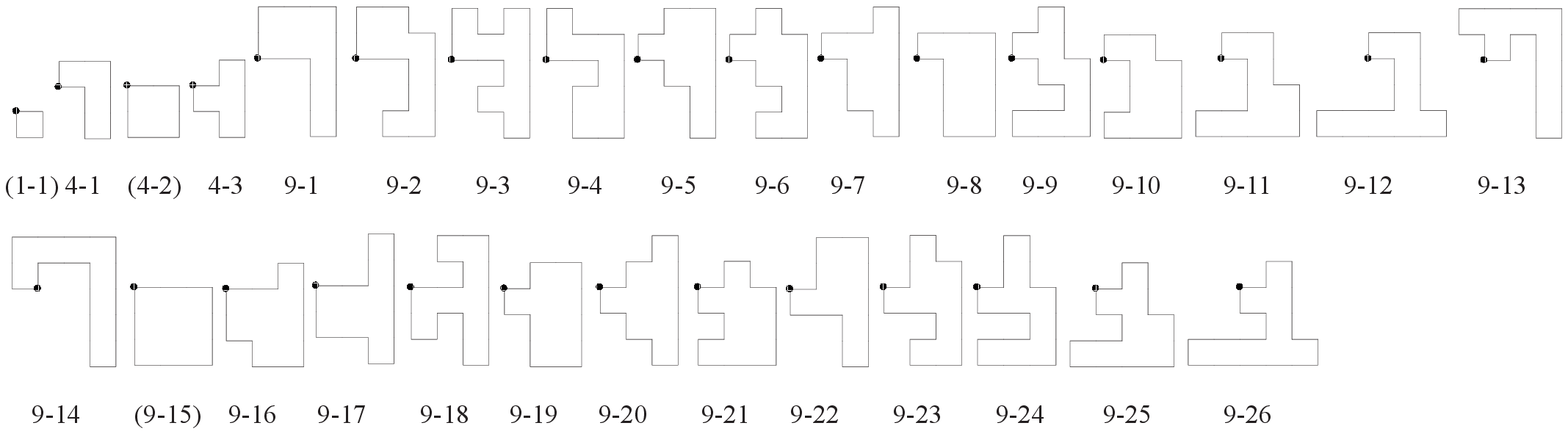,width=\linewidth}}
\vspace*{8pt}
\caption{
List of $n$-ominoes produced as described in section 3.1 for $n \le 9$; 
these are fundamental domains for
the {\bf p4g} group used to construct them. 
The labels indicate $n$ followed by the tile number for that $n$.
Parentheses indicate that the tiles produce tilings having more symmetries than the {\bf p4g} group that
generates the tilings.
\label{p4gfighi}
}
\end{figure}

\begin{figure}%[h]
\centerline{\psfig{file=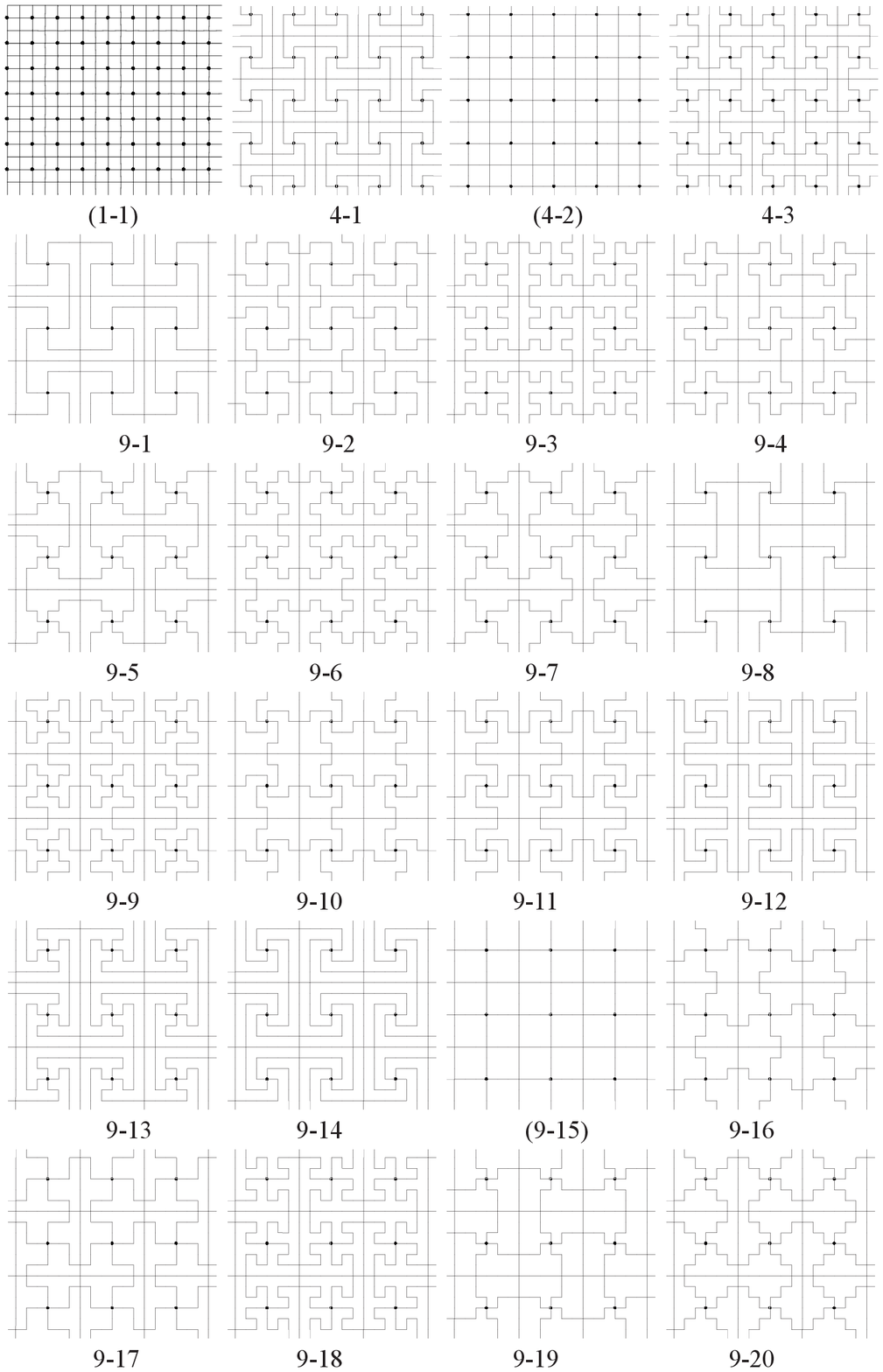,width=0.9\linewidth}}
\vspace*{8pt}
\caption{
List of {\bf p4g} isohedral tilings by $n$-ominoes in Fig.~\ref{p4gfighi}, 
generated by a given {\bf p4g} group. 
Labels correspond to those in Fig.~\ref{p4gfighi}. 
The symmetry group of each tiling is the given {\bf p4g} group, 
except for tilings whose labels are in parentheses.
\label{p4gtiling}
}
\end{figure}
\begin{figure}%[h]
\centerline{\psfig{file=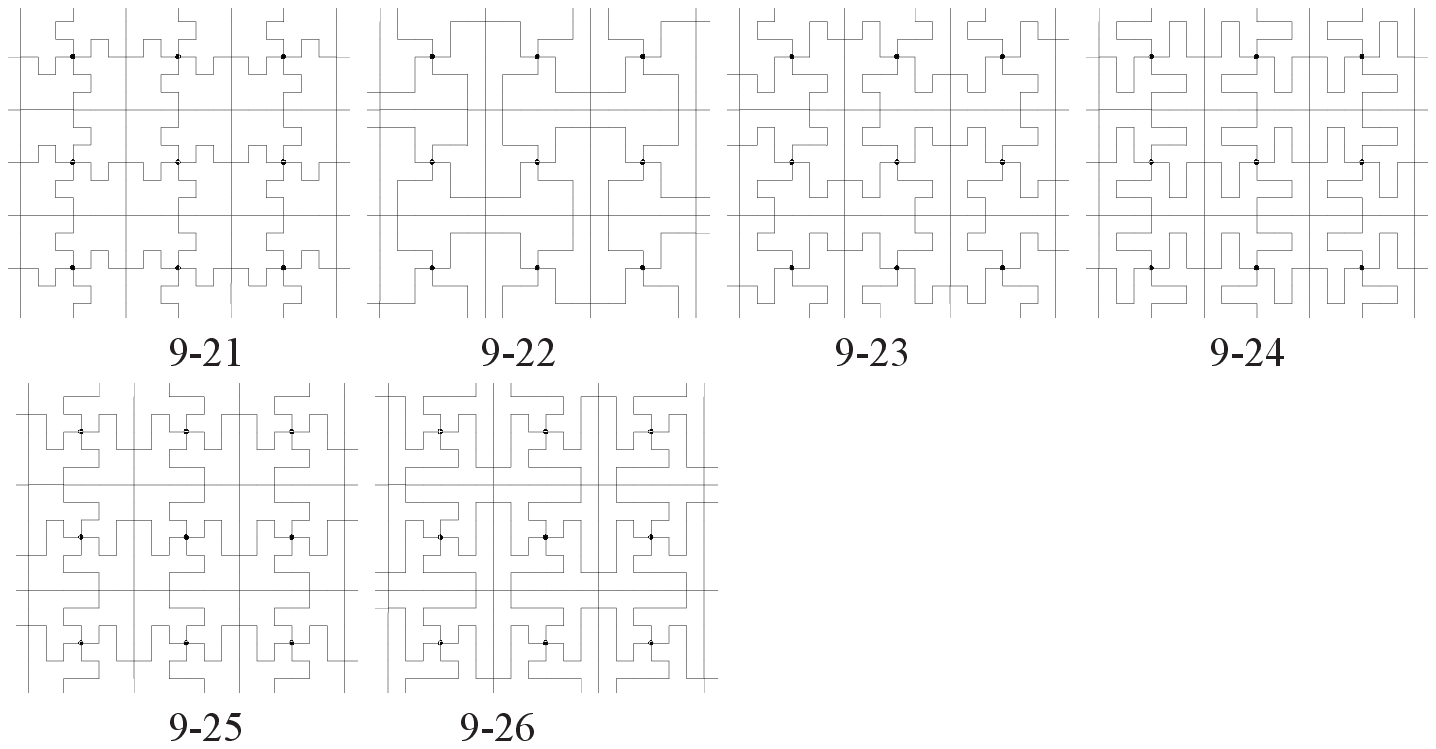,width=0.9\linewidth}}
    %\\
    \vspace*{8pt}
    \fontsize{8pt}{0pt}\selectfont
    {\it Fig.~\ref{p4gtiling}.} $(${\it Continued}$)$
\end{figure}

\subsection{Symmetries of tiles}
The only tilings in Fig.~\ref{p4gtiling} having symmetry group larger than the {\bf p4g} group $G$ that generated
them are tilings with square polyomino tiles. 
In fact, this is always true.

\begin{theorem}
\label{th2}
Every tiling by a polyomino $T$ produced by our algorithm in section~\ref{p4gcre} has as its
full symmetry group the {\bf p4g} group that generated it and $T$ is a fundamental domain for the tiling,
except in the case when the shape of $T$ is square. 
In that case, the tiling has symmetry group {\bf p4m} and $T$ is not a fundamental domain.
\end{theorem}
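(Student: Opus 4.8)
The plan is to prove the two halves of the dichotomy separately, using Theorem~\ref{th1} as the main lever. Throughout, let $R$ denote the square region bounded by the four mirror axes nearest the origin, so that $R$ has the origin at its center, area $4x^2$, and is tiled by the four images $T, RT, R^2T, R^3T$ of the polyomino under the $4$-fold rotation about the origin; here $R$ also denotes that rotation where convenient. The construction of section~\ref{p4gcre} confines $T$ to $R$, with the $4$-fold center (the origin) sitting at a corner of $T$ by observation (ii). The easy half is the square case: if $T$ is the $x\times x$ square then the four images fill $R$ as its four quadrant squares and $\mathcal{T}$ is the regular grid of $x\times x$ squares, whose full symmetry group $G'$ is a {\bf p4m} group strictly containing $G$. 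Since the square admits all of $D_4$ as self-symmetries, these are non-identity elements of $G'$ fixing $T$, so Theorem~\ref{th1} shows directly that $T$ is not a fundamental domain for $G'$ (indeed the {\bf p4m} domain has area $x^2/8$).

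For the main half, suppose $T$ is not square and, toward a contradiction, that $G'\supsetneq G$. By Theorem~\ref{th1} there is a non-identity $g\in G'$ with $g(T)=T$. First I would check that $g\notin G$: because $T$ is a fundamental domain for $G$, no non-identity element of $G$ can carry $T$ onto itself, since otherwise some interior point $p$ and its image $g(p)\neq p$ would be two $G$-equivalent points of $T$. Hence $g$ is a genuinely new symmetry. As $g$ maps the bounded, grid-aligned region $T$ to itself, it fixes the centroid of $T$ and belongs to $D_4$: it is a rotation through a multiple of $90^\circ$ about the centroid, or a reflection in a horizontal, vertical, or diagonal axis of $T$.

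Next I would identify $G'$. The subgroup chart for the square groups (chart 2 in \cite{schatt}) shows that the only wallpaper group properly containing a {\bf p4g} group and compatible with the same square tiling is the corresponding {\bf p4m} group; so $G'$ is {\bf p4m}, and the elements of $G'\setminus G$ are reflections whose axes pass through a $4$-fold center (together with the rotations such reflections generate). In particular $\mathcal{T}$ then carries a mirror axis through the origin. Since the origin is the center of $R$ and a $4$-fold corner of $T$, the confinement of $T$ to $R$ forces this axis to be a diagonal of $R$, and it must simultaneously be an axis of symmetry of $T$. Combining this diagonal mirror with the $4$-fold rotation about the origin makes the four-tile pinwheel filling $R$ invariant under the full dihedral group about the center; the only partition of $R$ into four congruent cell-polyominoes with this symmetry cuts $R$ along its mid-lines into the four quadrant squares, whence $T$ is the $x\times x$ square, contradicting our assumption.

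The hard part will be this last geometric forcing. One must rule out the alternative new elements — a reflection in a mid-line of $R$ rather than a diagonal, and a purely rotational self-symmetry $g$ of $T$ whose center lies in the interior of $T$ and would create a new $4$-fold center of $\mathcal{T}$ — by showing that each such possibility either violates the confinement of $T$ between the nearest mirror axes or else again produces a mirror through the origin that reduces to the case just handled; and then one must verify carefully that a {\bf p4g} fundamental domain whose four rotations tile $R$ with dihedral symmetry can be nothing but a quadrant square. The remaining assertions — that the square genuinely produces {\bf p4m} and that $T$ is then not minimal — are routine once the region $R$ and the action of $G$ are described explicitly, as above.
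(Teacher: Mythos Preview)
Your plan diverges substantially from the paper's argument and carries gaps beyond the ones you flag.

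The paper never tries to identify $G'$. Instead it analyses the boundary of $T$ directly. It observes that $T$ is obtained from the $x\times x$ quadrant square by removing unit squares along one of the two ``internal'' boundary arcs (those joining the origin to the mirror axes at $x\mathbf u$ and $x\mathbf v$) and adding their $90^\circ$-rotates along the other, so that (i) these two arcs are $90^\circ$-rotates of one another about the origin and are non-straight unless $T$ is the square, while (ii) at least one of the two edges of $T$ lying on the mirror axes of $G$ is a straight run of length $\ge x$. From (i) any rotational self-symmetry of $T$ forces $T$ to be square. For a reflective self-symmetry the paper splits on whether the two mirror-axis edges have equal length: if equal, the only candidate is the diagonal reflection swapping them, which fails by (i); if unequal, the reflection must fix the longer straight edge and hence have a horizontal or vertical axis, but then one checks that it does not send the $90^\circ$-rotate $T'$ of $T$ to any tile of $\mathcal T$, so it is not in $G'$ at all. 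That is the whole proof.

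Your route has three soft spots. First, the assertion that $G'$ must be {\bf p4m} is not something chart~2 of \cite{schatt} provides; that chart is a recognition flowchart, not a subgroup lattice, and a {\bf p4g} group can also sit properly inside another {\bf p4g} (one with denser $4$-fold centers), which you have not excluded. Second, even granting $G'$ is {\bf p4m}, Theorem~\ref{th1} gives you some $g\in G'\setminus\{1\}$ with $g(T)=T$; such a $g$ fixes the centroid of $T$, which need not be the origin, so you cannot conclude that the new mirror of $\mathcal T$ through the origin is itself a symmetry of $T$, nor that it is a diagonal of $R$ rather than axis-parallel (in {\bf p4m} both kinds pass through every $4$-fold center). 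Third, the endgame claim that a $D_4$-invariant partition of $R$ into four congruent polyominoes must be the quadrant partition is essentially the theorem restated inside $R$ and still needs an argument; the paper's edge-length bookkeeping is exactly what supplies it, and does so without ever touching $G'$.
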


\begin{proof}
Let $G'$ be the full symmetry group of a tiling $\cal T$ produced by a {\bf p4g} group $G$ acting on a
polyomino $T$ that is a fundamental domain for $G$, as described in section~\ref{p4gcre}. 
If $G$ is a proper
subgroup of $G'$, then by Theorem~\ref{th1}, 
a rotation or reflection symmetry in $G'$ is also a symmetry of $\cal T$. 
By its construction, one corner of $T$ is a 4-fold center for $G$ at the origin; 
without loss of generality, we may assume that $T$ contains the unit square shown in black in Fig.~\ref{X}. 
\begin{figure}[h]
\centerline{
\psfig{file=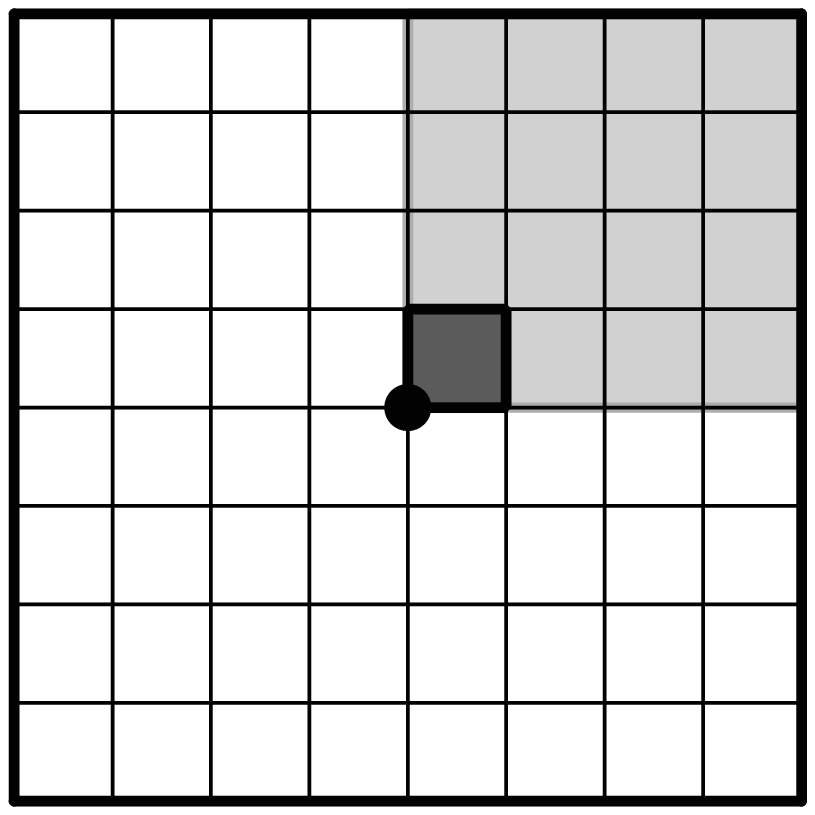,width=1.7cm}\psfig{file=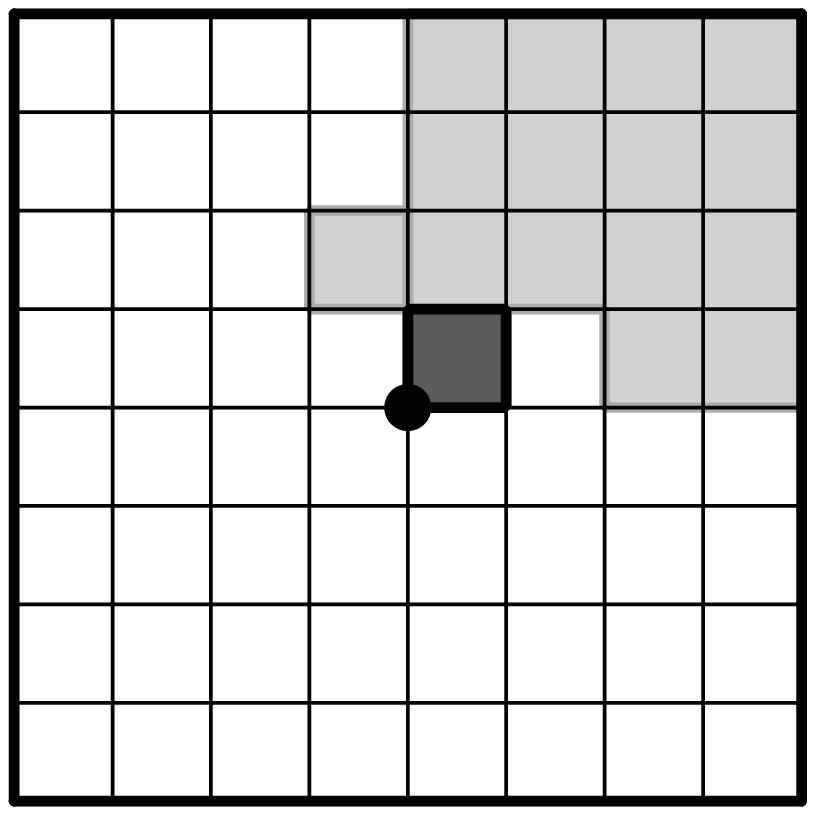,width=1.7cm}\psfig{file=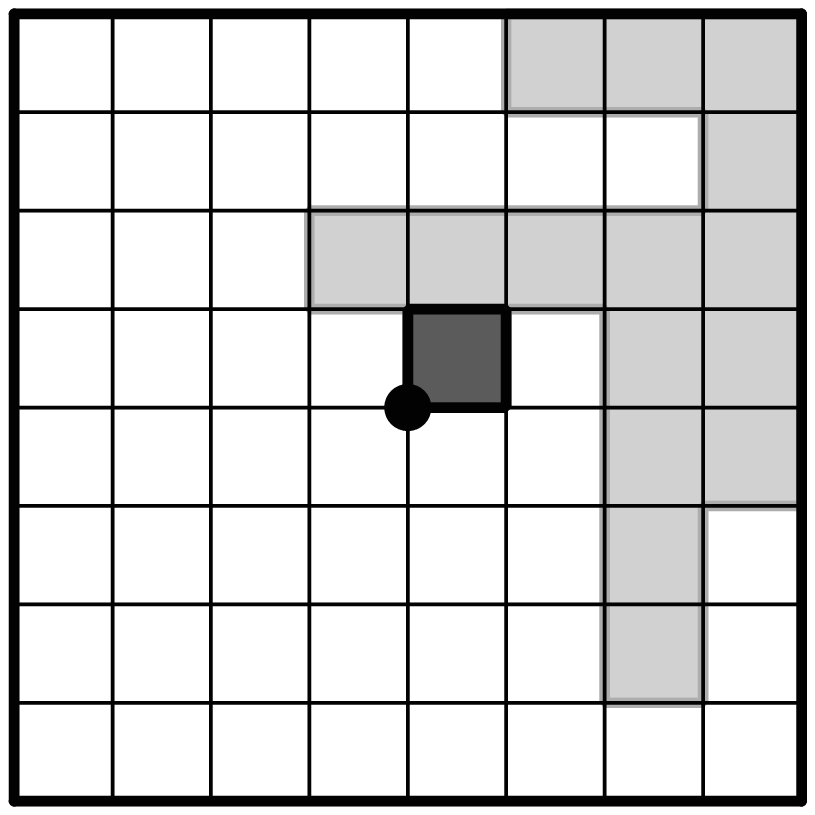,width=1.7cm}\psfig{file=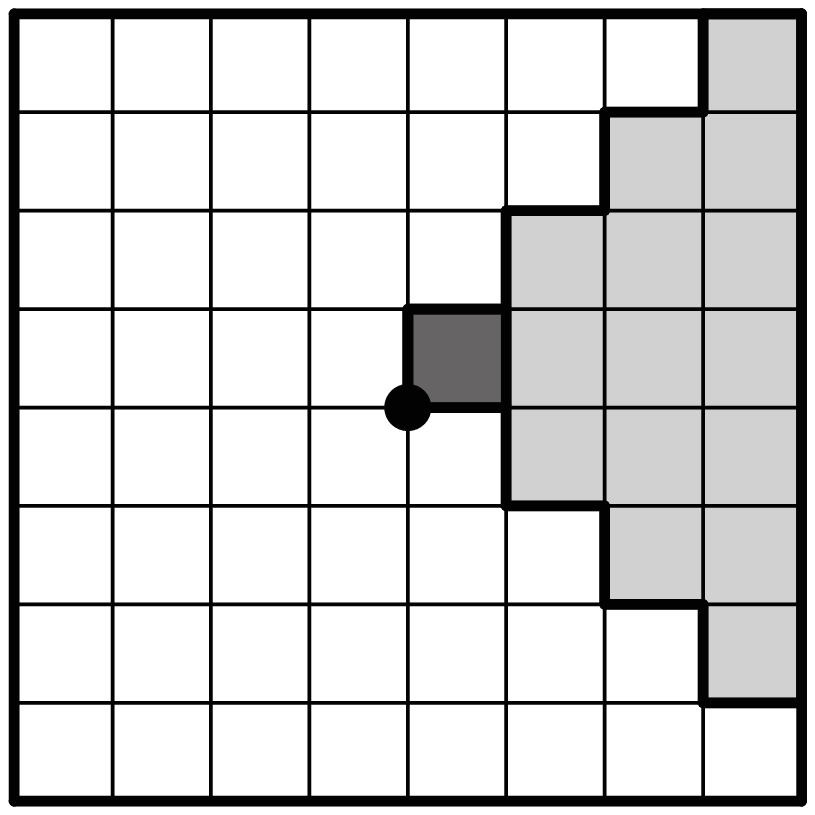,width=1.7cm}
}
\centerline{(a) \hspace{1.0cm} (b) \hspace{1.0cm} (c) \hspace{1.0cm} (d)}
\vspace*{8pt}
\caption{
The derivation of polyomino fundamental domains $T$ for a {\bf p4g} group by removing unit squares from and
adding equivalent unit squares to the original square polyomino fundamental domain in (a). Here $x = 4$.
\label{X}
}
\end{figure}
$T$ can also be obtained beginning with an $x \times x$ square polyomino tile as in Fig.~\ref{X}(a), 
then removing unit squares and adding other unit squares equivalent (by a rotation of $90^\circ$ or $270^\circ$ about 
the origin) to
these, always keeping the new tile homeomorphic to a disc. 
Figures~\ref{X}(b), (c), and (d) illustrate this. 
From this, we see that one edge of $T$ that abuts a reflection axis in $G$ has at least $x$ unit
squares in a straight row. Since the two edges of $T$ that join the origin to each of the reflection
axes at $x {\mathbf u}$ and $x {\mathbf v}$ are related by a $90^\circ$ rotation and 
are not straight unless $T$ is a square, 
the only possible $T$ that can have rotation symmetry is the square polyomino in Fig.~\ref{X}(a).

Suppose that $T$ is not square but has reflection symmetry. 
If the two edges of $T$ that abut the
reflection axes at $x{\mathbf u}$ and $x {\mathbf v}$ are the same length, 
that reflection must map those edges to each other. 
But that diagonal reflection will not leave $T$ invariant (see Fig.~\ref{X}(b)). 
If the two edges of $T$ that abut the reflection axes at $x {\mathbf u}$ and $x {\mathbf v}$ have 
different lengths, a reflection symmetry of $T$
must leave fixed the longer edge, so must have its axis a horizontal or vertical line. 
But this reflection cannot be a symmetry in $G'$. 
For example, if $T'$ is the image of $T$ rotated by $90^\circ$ about
the origin, this reflection will not map $T'$ onto another tile in $\cal T$ (see Fig.~\ref{X}(d)).
\end{proof}

\section{{\bf p4m}}
\label{p4m}
The symmetry elements of a {\bf p4m} group are shown in Fig.~\ref{fig:p4m}.
\begin{figure}[h]
\centerline{
\psfig{file=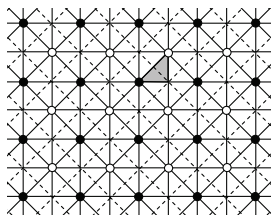,width=3.5cm}}
\vspace*{8pt}
\caption{
The symmetry elements of a {\bf p4m} group. 
Black and white circles are 4-fold centers, solid lines are reflection
axes and dashed lines are glide-reflection axes. 
Points where glide-reflection axes intersect are 2-fold rotation
centers. The shaded area is a fundamental domain for the {\bf p4m} group.
\label{fig:p4m}
}
\end{figure}
Black and white circles are 4-fold centers, solid lines are mirror reflection axes, 
and dashed lines are glide-reflection axes. Points
where glide-reflection axes intersect are 2-fold centers. The shaded region is a fundamental
domain for the {\bf p4m} group. A polyomino that is a fundamental domain for the tiling must have
its edges on the reflection axes, by observation (i) in section~\ref{intro}. 
But this is clearly impossible.

\begin{theorem}
There are no {\bf p4m} isohedral tilings having polyominoes as fundamental domains.
\end{theorem}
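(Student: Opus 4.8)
The plan is to exploit two facts: observation (i) from Section~\ref{intro}, which forbids a reflection axis from passing through the interior of a tile that is a fundamental domain, and the fact that \textbf{p4m} is generated by its reflections, which forces any fundamental domain to coincide with a fixed triangular region. A polyomino cannot have the shape of that triangle, and this is the contradiction.

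First I would describe the mirror structure of the \textbf{p4m} group shown in Fig.~\ref{fig:p4m}. The reflection axes occur in four directions---horizontal, vertical, and the two diagonals at $45^\circ$---and together they cut the plane into congruent open chambers, each a $45^\circ$--$45^\circ$--$90^\circ$ right isosceles triangle. Two of the three sides of such a triangle lie on axis-parallel mirrors (the legs), while the third, the hypotenuse, lies on a diagonal mirror. Since \textbf{p4m} is a reflection group, each such closed triangle is itself a fundamental domain, so every chamber has area equal to the minimal fundamental-domain area for the group.

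Now let $T$ be a polyomino that is a fundamental domain for a \textbf{p4m} group $G$. Because $T$ is homeomorphic to a disk its interior is connected, and by observation (i) this interior meets no reflection axis of $G$; hence $\mathrm{int}(T)$ lies in a single chamber $\Delta$. As a fundamental domain, $T$ has the same area as $\Delta$, and from $\mathrm{int}(T)\subseteq\Delta$ together with equality of areas I would conclude that $T$ equals the closed triangle $\overline{\Delta}$: every point of $\Delta$ lies in the closure of $\mathrm{int}(T)$, for otherwise a neighborhood of positive area inside $\Delta$ would avoid $T$ and defeat the area balance, so $\overline{\Delta}\subseteq T\subseteq\overline{\Delta}$.

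The contradiction is then one of shape. The triangle $\overline{\Delta}$ has its hypotenuse along a diagonal mirror, a segment at $45^\circ$ to the lattice directions, whereas the boundary of any polyomino is composed exclusively of horizontal and vertical unit segments. No such boundary can contain a $45^\circ$ segment, so $T$ cannot equal $\overline{\Delta}$, proving that no polyomino is a fundamental domain for \textbf{p4m}. The step I expect to be the main obstacle is the area-matching argument forcing $T=\overline{\Delta}$ rather than merely $\mathrm{int}(T)\subseteq\Delta$; it is the only place requiring a little point-set care (that a polyomino is regular closed and its boundary has measure zero), and once it is in place the diagonal-edge contradiction is immediate, matching the paper's remark that the situation is ``clearly impossible''.
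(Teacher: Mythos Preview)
Your argument is correct and follows the same route as the paper: both use observation~(i) to trap the fundamental domain inside a single $45^\circ$--$45^\circ$--$90^\circ$ mirror chamber and then note that a polyomino boundary cannot contain the diagonal hypotenuse. The paper compresses this to the single line ``a polyomino that is a fundamental domain \ldots\ must have its edges on the reflection axes \ldots\ but this is clearly impossible''; your chamber-plus-area argument is a careful unpacking of exactly that claim.
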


\section{{\bf p3}}
\label{p3}
\subsection{Creating polyiamonds as fundamental domains for {\bf p3} symmetry groups}
\label{p3:create}
To build our $n$-iamond tiles that will be fundamental domains for a {\bf p3} isohedral tiling, we begin
with a lattice of unit equilateral triangles. 
By observation (iii) in section~\ref{intro}, 
the 3-fold rotation centers for a {\bf p3} symmetry group are located at the lattice points. 
So first we place a 3-fold
rotation center, a black circle, at a lattice point and call this the origin, then place vectors 
$\mathbf u$ and $\mathbf v$
at the origin along edges of a unit triangle. 
Next we place a second 3-fold rotation center, a
white circle, at $x{\mathbf u} + y{\mathbf v}$, where $x$ and $y$ are nonnegative integers, 
not both zero. See Fig.~\ref{fig:p3}(a).
\begin{figure}[h]
\centerline{
\psfig{file=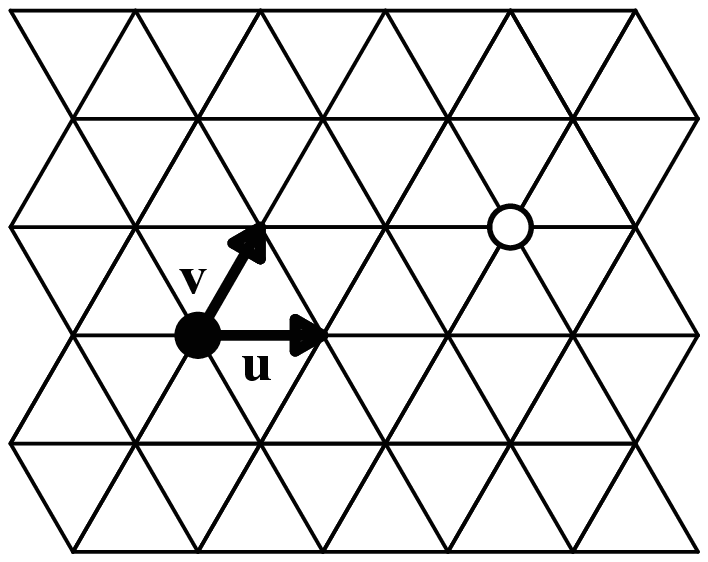,width=2.7cm}\psfig{file=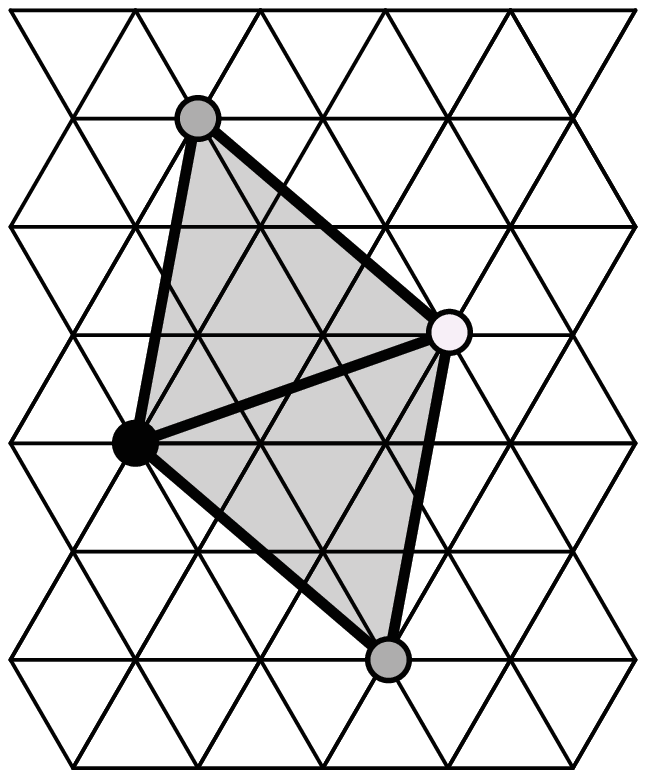,width=2.52cm}}
\centerline{(a) \hspace{2.1cm} (b)}
\vspace*{8pt}
\caption{
(a) A lattice of unit triangles with two 3-fold rotation centers that generate a {\bf p3} group. 
Here $x = 2$, $y = 1$. 
(b) The shaded region is a standard fundamental domain for the {\bf p3} group generated by the black and white
3-fold centers; in this example, the area of a fundamental domain is 14 triangular units.
\label{fig:p3}
}
\end{figure}
These two choices of 3-fold rotation centers determine the whole {\bf p3} lattice of rotation centers; 
3-fold rotations about the black and white centers generate the whole {\bf p3} symmetry group.

A standard fundamental domain for the {\bf p3} group $G$ generated by the black and white 3-fold
centers is a $120^\circ$ rhombus having as short diagonal the segment that joins the black and white
circles, as shown in Fig.~\ref{fig:p3}(b). The other two vertices of this rhombus (gray in Fig.~\ref{fig:p3}(b)) are also
3-fold centers for $G$. The area $S$ of a fundamental domain for $G$ is given by
\begin{equation}
S=2 (x^2+y^2+xy)   
\end{equation}
taking the area of a unit triangle as 1. 
Since we want our $n$-iamond to be a fundamental
domain, $n = S$. Therefore,
\begin{equation}
n=2, 6, 8, 14, 18, 24, 26, 32, 38, 42, 50, 54, 56, 62, 72, 74, 78, 86, 96, 98, 98, \ldots,
\label{p3:n}
\end{equation}
where two pairs $(x,y)$, namely, $(5,3)$ and $(7,0)$ correspond to $n = 98$.

All unit triangles are classified into n equivalence classes by the action of $G$, and we denote
the equivalence class of a unit triangle $e$ as $C(e)$.

We construct a set $\mathscr{T}_n$ of $n$-iamonds
that are fundamental domains for $G$ by modifying
Procedure 1 in section~\ref{p4}, 
choosing $n$ from the list in (\ref{p3:n}), and replacing definitions in section~\ref{p4} with these:

\begin{enumerate}
\item 
$T$ is a set of unit triangles; $B(T)$ is a set of unit triangles that are edge-adjacent to the
triangles in $T$; $\mathscr{T}_n$ is a set of $n$-iamonds.
\item
When $T$ is the empty set, we define $B(\emptyset)$ as the set of six triangles around the origin in
Fig.~\ref{fig:p3}(a).
\item
We define $E(T)$, the boolean function of $T$, which is true if $\# T = n$ and the white circle is on
the boundary of $T$ . Otherwise $E(T)$ is false.
\item
$B'(T) = \{e | e \in B(T),C(e) \ne C(f), \mbox{for all} f \in T \}$. 
This is the set of all unit triangles that are edge-adjacent to those in $T$, 
but not equivalent to any unit triangles in $T$.
\end{enumerate}
The procedure creates a sequence of pairs of sets $(T, U_T)$, 
in which $U_T$ is the set of unit triangles that can be added to $T$ to
create the next set $T$ in the sequence.

Fig.~\ref{p3fighi} shows the set of inequivalent $n$-iamonds in $\mathscr{T}_n$
for $n \le 8$. From each of these we can
obtain the associated {\bf p3} tiling by using the black and white circles as 3-fold rotation centers.
Fig.~\ref{p3tiling} shows the corresponding isohedral tilings produced by $n$-iamonds in Fig.~\ref{p3fighi}.
\begin{figure}[h]
\centerline{
\psfig{file=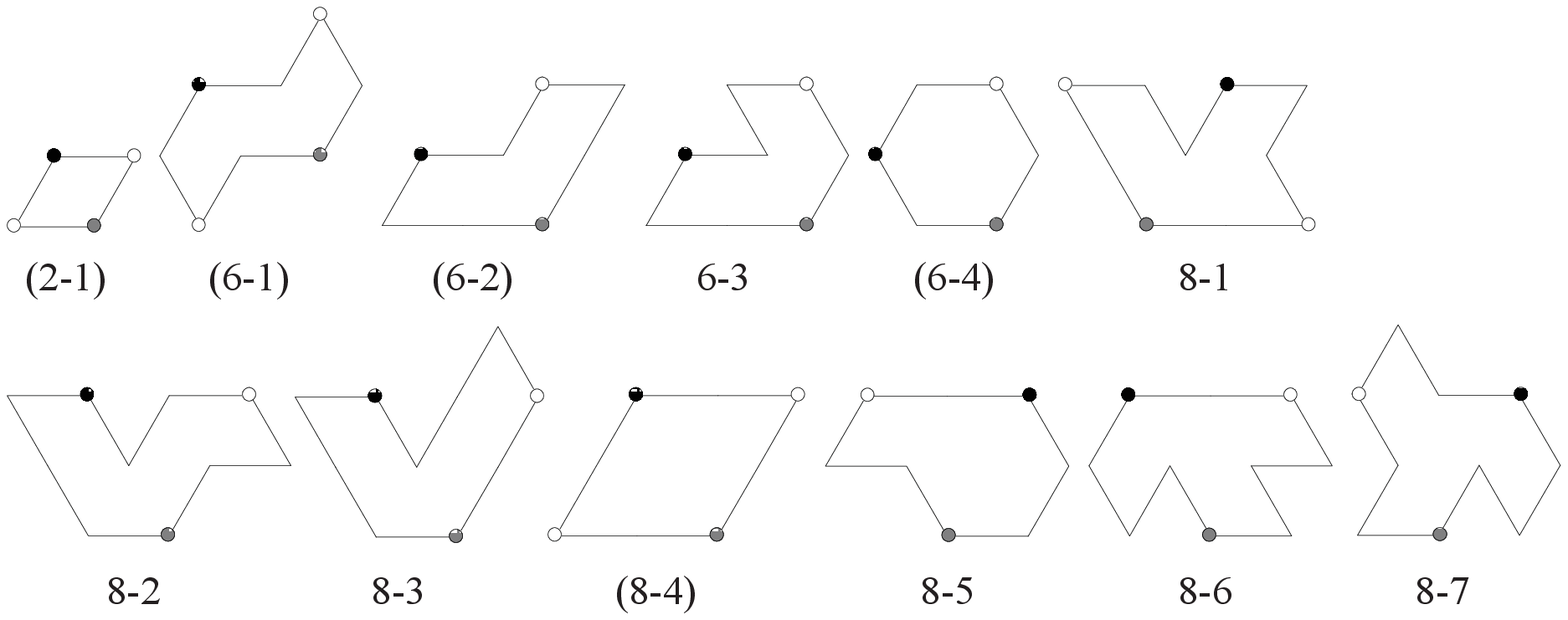,width=0.9\linewidth}
}
\vspace*{8pt}
\caption{
List of $n$-iamonds produced as described in section~\ref{p3:create} for $n \le 8$. 
The tiles are fundamental domains for the
{\bf p3} group used to construct them. 
The labels indicate $n$ followed by the tile number for that $n$. 
Parentheses indicate that the tiles produce tilings having more symmetries 
than the {\bf p3} group that generates the tilings.
\label{p3fighi}
}
\end{figure}
\begin{figure}[h]
\centerline{
\psfig{file=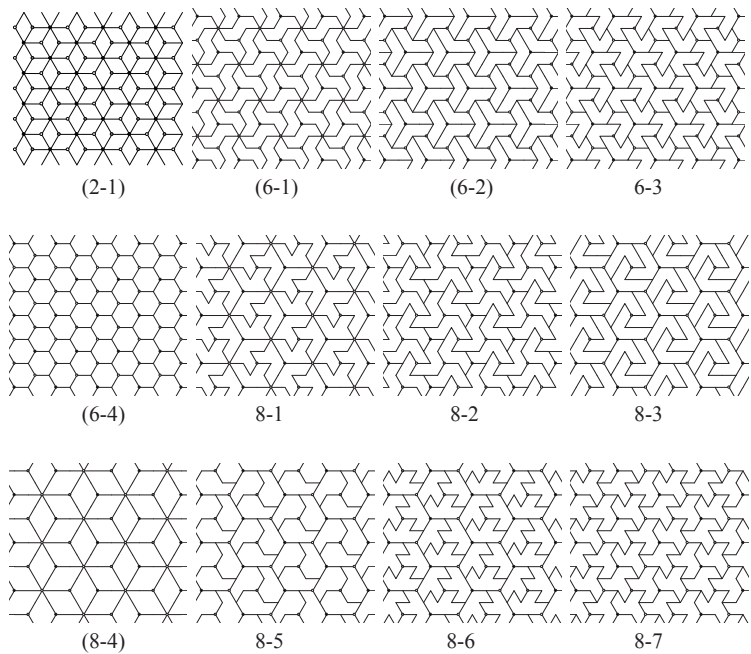,width=0.8\linewidth}}
\vspace*{8pt}
\caption{
List of {\bf p3} isohedral tilings by $n$-iamonds in Fig.~\ref{p3fighi}, 
generated by a given {\bf p3} group. 
Labels correspond to those in Fig.~\ref{p3fighi}. 
The symmetry group of each tiling is the {\bf p3} group that generated it, 
except for tilings whose labels are in parentheses.
\label{p3tiling}
}
\end{figure}

\subsection{Symmetries of tiles}
\label{p3:sym}
The list of isohedral tilings in Fig.~\ref{p3tiling} includes some tilings having symmetry group larger than
the {\bf p3} group $G$ generated by 3-fold rotations about the black and white rotation centers. 
By Theorem~\ref{th1}, when this occurs, the $n$-iamond that generates the tiling must have reflection and/or
3-fold rotation symmetry. 
The full symmetry group of the tiling could be of any of these types:
{\bf p3}, {\bf p3m1}, {\bf p31m}, {\bf p6}, and {\bf p6m}. 
The tilings in Fig.~\ref{p3tiling} that have additional symmetries, and the
$n$-iamonds in Fig.~\ref{p3fighi} that generate them are indicated by parentheses in their labels. 
We outline below how to identify these polyiamonds.

\begin{itemlist}
\item
Select a polyiamond $T$ that has rotation and/or reflection symmetry, and examine its tiling
$\cal T$ generated by $G$.
\item
Look at all vertices and centers of unit triangles in $T$ except for those centers of rotation in
$G$ (black, white and gray centers), and determine whether or not any can be new 3-fold centers
for $\cal T$. 
If a new 3-fold center is found, then the full symmetry group $G'$ of $\cal T$ contains $G$ as a
proper subgroup and $T$ is not a fundamental domain for $\cal T$. 
If a new 3-fold center is a 6-fold
center, $G'$ is type {\bf p6m} if there is a reflection axis joining it to any nearest 3-fold center;
otherwise, $G'$ is type {\bf p6}. 
Tiling (6-4) in Fig.~\ref{p3tiling} is type {\bf p6m}, and has a 6-fold center at the center
of the hexagonal 6-iamond. 
If there are new 3-fold centers but none are 6-fold centers, then $G'$ is type {\bf p3}.
\item
If the tiling contains only 3-fold centers for $G$, 
we seek new symmetry elements by using chart 2 in\cite{schatt}. %(Schattschneider 1978).
\begin{enumerate}
\item
If some 3-fold center for $G$ is a 6-fold center, 
the tiling $\cal T$ will have {\bf p6m} symmetry if the
line joining a 6-fold center to a nearest 3-fold center is a reflection axis for $\cal T$, 
otherwise it will have {\bf p6} symmetry. 
In Fig.~\ref{p3tiling}, tilings (2-1) and (8-4) have {\bf p6m} symmetry, 
and tiling (6-1) has {\bf p6} symmetry.
\item
Otherwise, if the line connecting two adjacent different 3-fold centers 
(say, black to white) is a reflection axis, $\cal T$ has {\bf p3m1} symmetry.
\item
Otherwise, if the line connecting adjacent 3-fold centers of the same kind 
(black-black, white-white, or gray-gray) is a reflection axis, 
$\cal T$ has {\bf p31m} symmetry. 
In Fig.~\ref{p3tiling}, tiling (6-2) has {\bf p31m} symmetry.
\end{enumerate}
\end{itemlist}

%\newpage
\section{{\bf p31m}}
\label{p31m}
\subsection{Creating polyiamonds as fundamental domains for {\bf p31m} symmetry groups}
\label{p31m:create}
We begin with a lattice of unit triangles as in section~\ref{p3:create}, 
since a {\bf p31m} symmetry group contains 3-fold rotations, reflections, and glide-reflections; 
the subgroup generated by its 3-fold rotations is type {\bf p3}. 
\begin{figure}[h]
\centerline{
\psfig{file=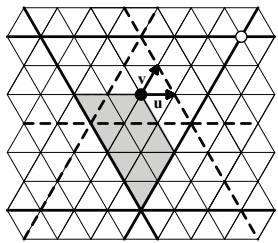,width=0.3\linewidth}}
\vspace*{8pt}
\caption{
A lattice of unit triangles with a 3-fold rotation center at the origin (black circle), reflection axes 
(thick solid lines) and glide-reflection axes (dashed lines) that are symmetries in a {\bf p31m} group. 
The 3-fold rotation about the
black circle and reflections about one axis generate the {\bf p31m} group, 
which also has 3-fold centers where reflection
axes intersect (one of these is shown as a white circle). 
A typical fundamental domain for the group is shaded. 
Here $x = 2$; the area of the fundamental domain is 12 triangular units.
\label{fig:p31m}
}
\end{figure}
To build an $n$-iamond tile that is a fundamental domain for a {\bf p31m} isohedral tiling,
we first place a 3-fold rotation center, a black circle, at a lattice point and call this the origin.
Then we place vectors $\mathbf u$ and $\mathbf v$ at the origin along edges of a unit triangle (see Fig.~\ref{fig:p31m}). % 12).

Next, we place reflection axes that are nearest the origin; 
these must lie along the edges of
unit triangles according to observation (i) in section \ref{intro}. 
These three axes intersect at the points
$x ({\mathbf u}+{\mathbf v})$, $x(-2{\mathbf u} +{\mathbf v})$, and $x({\mathbf u} -2{\mathbf v})$, 
where $x$ is a positive integer; 
their intersections are equivalent 3-fold rotation centers 
(one of these is shown as a white circle in Fig.~\ref{fig:p31m}). 
The placement of the origin and choice of $x$ determine the whole {\bf p31m} lattice of 
rotation centers, reflection axes and glide-reflection axes. 
The {\bf p31m} group $G$ is generated by 3-fold rotations about the origin and
reflections in one of the three reflection axes.

The area $S$ of a fundamental domain for the tiling is one-third the area enclosed by the
three reflection axes closest to the origin, which is
\begin{equation}
S=3 x^2
\end{equation}
where the area of a unit triangle is 1. 
Since we want our $n$-iamond to be a fundamental domain, $n = S$. 
Therefore,
\begin{equation}
n=3, 12, 27, 48, 75, 108, 147, 192, 243, \ldots.
\label{p31m:n}
\end{equation}  

The action of $G$ classifies all unit triangles into $n$ equivalence classes 
and we denote the equivalence class of a unit triangle $e$ as $C(e)$. 
We construct a set $\mathscr{T}_n$ of $n$-iamonds
that are
fundamental domains for $G$ as described in section~\ref{p3:create}, 
choosing $n$ from the list in (8) and
having the additional constraint that the unit triangles in the set $T$ 
must be in the region bounded by the mirror reflection axes in Fig.~\ref{fig:p31m}.

Fig.~\ref{p31mfighi} shows the set of inequivalent $n$-iamonds in $\mathscr{T}_n$  for $n \le 12$.  
Fig.~\ref{p31mtiling} shows the corresponding {\bf p31m} tilings obtained 
by performing a 3-fold rotation about the origin (black circle) to fill out 
the triangle bounded by reflection axes (Fig.~\ref{fig:p31m}), 
then reflecting this triangle in its edges.
\begin{figure}[h]
\centerline{
\psfig{file=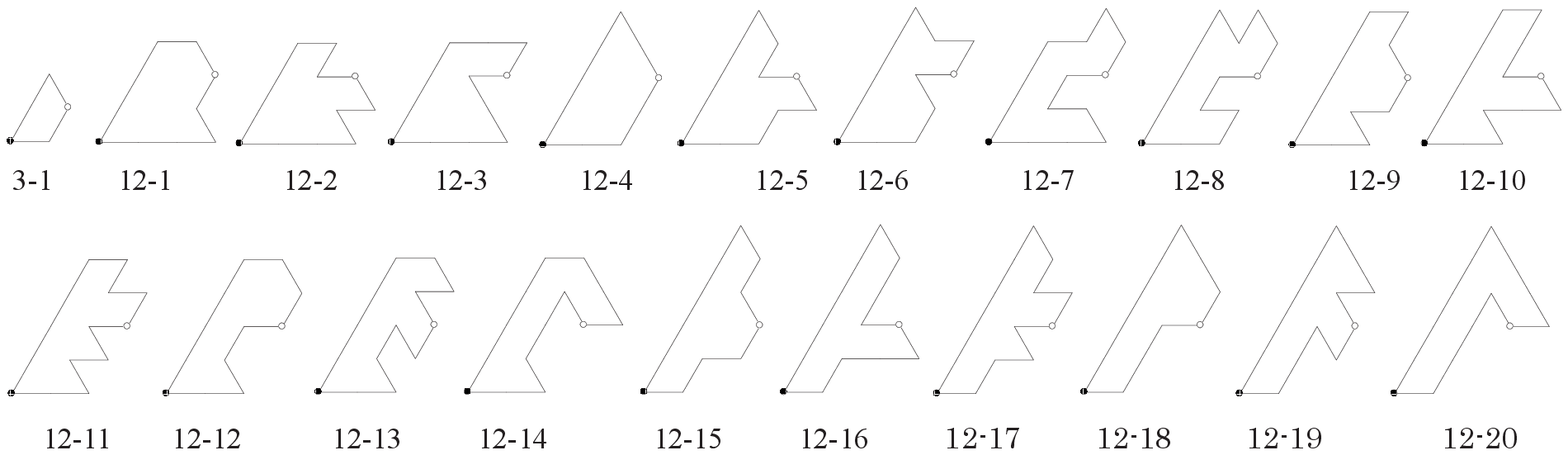,width=1.0\linewidth}}
\vspace*{8pt}
\caption{
List of $n$-iamonds produced as described in section \ref{p31m:create} for $n \le 12$. 
The tiles are fundamental domains for the {\bf p31m} group used to construct them. 
The labels indicate $n$ followed by the tile number for that $n$. 
\label{p31mfighi}
}
\end{figure}
\begin{figure}%[!h]
\centerline{
\psfig{file=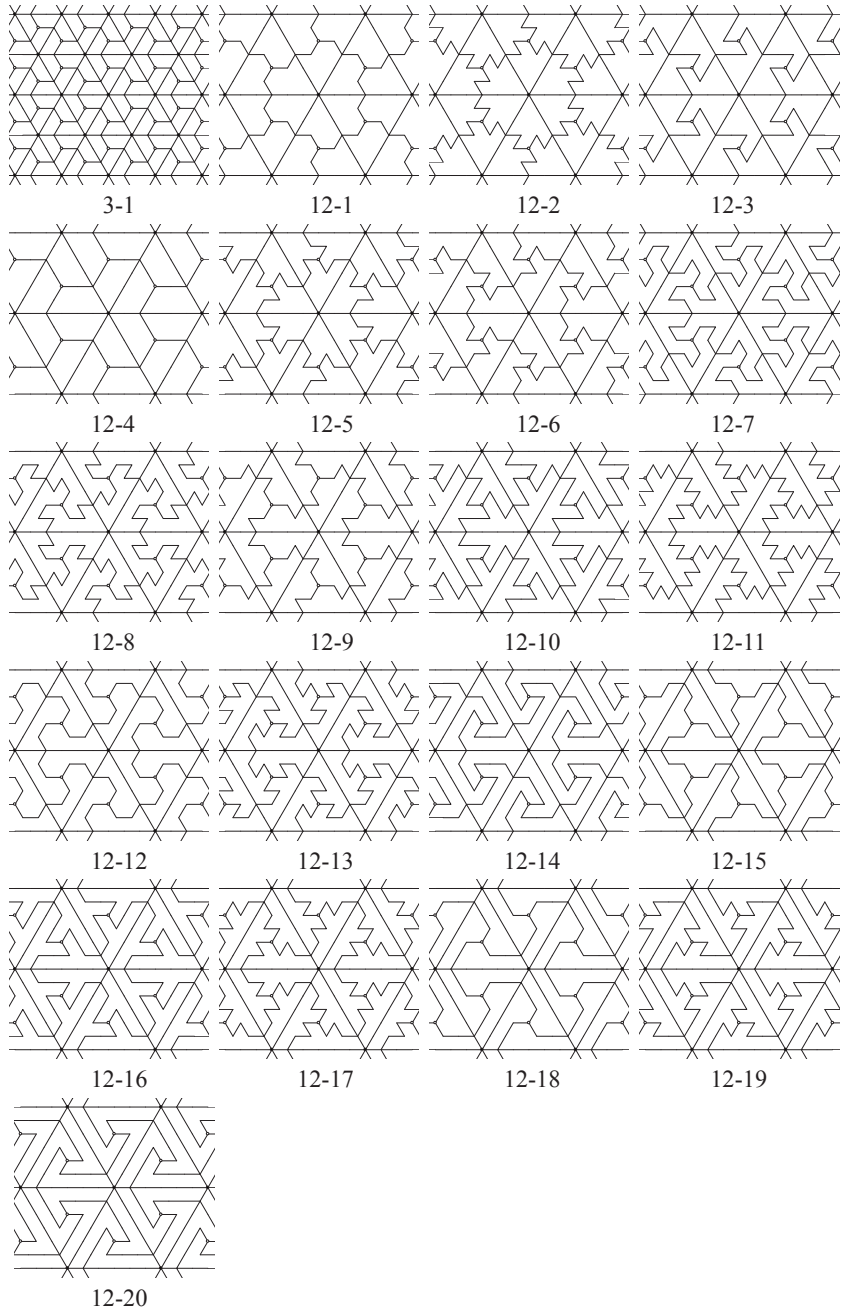,width=0.9\linewidth}}
\vspace*{8pt}
\caption{
List of {\bf p31m} isohedral tilings by $n$-iamonds in Fig.~\ref{p31mfighi}, 
generated by a given {\bf p31m} group, for $n \le 12$.
Labels correspond to those in Fig.~\ref{p31mfighi}. 
Every $n$-iamond in Fig.~\ref{p31mfighi} is a fundamental domain for its tiling in Fig.~\ref{p31mtiling}.
\label{p31mtiling}
    }
\end{figure}

\subsection{Symmetries of tiles}

\begin{theorem}
Every isohedral tiling by a polyiamond $T$ produced by our algorithm in section \ref{p31m:create}
has as its symmetry group the {\bf p31m} group that generated it and $T$ is a fundamental domain for
the tiling.
\end{theorem}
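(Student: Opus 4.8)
The plan is to argue by contradiction using Theorem~\ref{th1}. Write $G$ for the \textbf{p31m} group used in the construction and $G'$ for the full symmetry group of the resulting tiling $\mathcal T$. The only wallpaper group that properly contains a \textbf{p31m} group is \textbf{p6m} (the group \textbf{p6} contains no reflections, and \textbf{p3m1} does not contain \textbf{p31m}), so if $G\neq G'$ then $G'=\textbf{p6m}$. By Theorem~\ref{th1} there is then a non-identity element $g\in G'$ with $g(T)=T$, and since $T$ is bounded, $g$ has finite order and so is either a rotation or a reflection. It suffices to show that no rotation or reflection in $G'$ can fix $T$.

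First I would record the local picture at the origin $O$. By observation (iii) of Section~\ref{intro}, the $3$-fold center $O$ lies on $\partial T$ at a corner where exactly two unit triangles of $T$ meet; thus $T$ has interior angle $120^\circ$ at $O$, and the two boundary edges of $T$ issuing from $O$ are unit segments of the triangular lattice, pointing in two of the three lattice-edge directions and differing by $120^\circ$. Let $\Delta$ be the triangle bounded by the three reflection axes nearest the origin, so that $O$ is its centroid. In $G'=\textbf{p6m}$ the centroid $O$ is an order-$3$ center, its stabilizer being the dihedral group of order $6$ generated by the three medians of $\Delta$, while the white centers at the vertices of $\Delta$ are order-$6$ centers. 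Moreover $O$ is the only order-$3$ center of $G'$ that meets $T$: every other order-$3$ center of $G'$ is the centroid of a neighbouring copy of $\Delta$ and hence lies outside $\Delta\supseteq T$. Because $g$ maps order-$3$ centers to order-$3$ centers and preserves $T$, it must fix $O$.

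The contradiction then comes from the directions of the symmetry elements through $O$. A rotation in $G'$ fixing $O$ must be centered at $O$, and because $O$ is an order-$3$ center the only such rotations are the two $3$-fold rotations; each of these carries $T$ to a different one of the three copies of $T$ meeting at $O$, so no rotation fixes $T$. Hence $g$ is a reflection, and its axis passes through $O$; to map the $120^\circ$ corner of $T$ onto itself, the axis must be the internal bisector of that corner. Since the two edges at $O$ lie in lattice-edge directions, their bisector also points in a lattice-edge direction, namely a multiple of $60^\circ$. But the reflection axes of \textbf{p6m} through the order-$3$ center $O$ are exactly the three medians of $\Delta$, which are \emph{perpendicular} to the edge directions and hence point in directions that are odd multiples of $30^\circ$. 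A multiple of $60^\circ$ is never an odd multiple of $30^\circ$, so no \textbf{p6m} reflection axis through $O$ bisects the corner, and $g$ cannot be a reflection either. Therefore no non-identity element of $G'$ fixes $T$; by Theorem~\ref{th1}, $G$ is not a proper subgroup of $G'$, whence $G=G'$ and $T$ is a fundamental domain for $\mathcal T$.

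The step I expect to be the main obstacle is establishing the two structural facts that feed this parity argument: that $O$ is the \emph{unique} order-$3$ center of $G'$ touching $T$ (forcing $g$ to fix $O$), and that the \textbf{p6m} reflection axes through $O$ are the medians of $\Delta$, perpendicular to the lattice-edge directions. Granting these, the mismatch between lattice-edge directions (multiples of $60^\circ$) and median directions (odd multiples of $30^\circ$) finishes the proof. This is precisely the feature that fails in the \textbf{p4g} setting of Theorem~\ref{th2}: there the corner bisector can align with a lattice reflection axis, which is why the square tile alone acquires extra symmetry.
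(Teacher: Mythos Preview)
Your argument has a genuine gap at the very first step. The assertion that ``the only wallpaper group that properly contains a \textbf{p31m} group is \textbf{p6m}'' is not correct: a given \textbf{p31m} group sits as an index-$3$ subgroup inside a \textbf{p3m1} group on a finer lattice, and also inside a \textbf{p31m} group on a finer lattice, in addition to the index-$2$ \textbf{p6m} extension you have in mind. (Your parenthetical ``\textbf{p3m1} does not contain \textbf{p31m}'' is exactly the false step.) All that is known a priori is $G\subseteq G'\subseteq H$, where $H$ is the full \textbf{p6m} symmetry group of the underlying unit-triangle lattice; the structure of $G'$ between these two is not pinned down. Your later parity argument uses this identification essentially: you need that the only reflections of $G'$ through $O$ are the three medians of $\Delta$ (odd multiples of $30^\circ$). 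That is true for the index-$2$ \textbf{p6m} extension of $G$, but in $H$ itself the point $O$ is a lattice vertex, hence a $6$-fold center, and $H$ has reflection axes through $O$ in \emph{all six} directions, including the $60^\circ$ bisector of your $120^\circ$ corner. So once the hypothesis ``$G'$ is the index-$2$ \textbf{p6m}'' is removed, the directional mismatch you rely on disappears, and the element $g$ supplied by Theorem~\ref{th1} is no longer excluded by your argument. (Relatedly, the claim that $O$ is the \emph{unique} order-$3$ center of $G'$ on $T$ also presupposes this particular $G'$.)

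The paper's proof avoids this difficulty by never trying to identify $G'$. Instead it works entirely with intrinsic features of $T$: since $T$ is obtained from the standard rhombic fundamental domain by removing unit triangles and adding their $120^\circ$-rotates about $O$, $T$ has one straight edge along a reflection axis of $G$ of length at least $x+1$, and at most one other straight edge of that length. This immediately rules out $3$-fold rotational symmetry of $T$, and a short case analysis (equal vs.\ unequal straight-edge lengths) rules out any reflection symmetry of $T$ that is simultaneously a symmetry of $\mathcal T$. By Theorem~\ref{th1} this forces $G=G'$. Your angle/direction idea at $O$ is attractive, but to make it rigorous you would have to argue against \emph{every} possible symmetry $g\in H$ fixing $T$ (not just those in one candidate $G'$), and then separately check that any residual symmetry of $T$ fails to be a symmetry of $\mathcal T$; at that point you are essentially reproducing the paper's edge-length analysis.
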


\begin{proof}
Our proof is analogous to that for Theorem~\ref{th2}. 
Let $G'$ be the full symmetry group of a
tiling $\cal T$ produced by a {\bf p31m} group $G$ acting on a polyiamond $T$ that is a fundamental domain for $G$, 
as described in section \ref{p31m:create}. 
If $G$ is a proper subgroup of $G'$, then by Theorem~\ref{th1}, 
a reflection symmetry or 3-fold rotation symmetry in $G'$ is also a symmetry of $T$. 
$T$ can be obtained from the shaded fundamental domain shown in Fig.~\ref{fig:p31m} 
by removing unit triangles and adding unit triangles equivalent to these 
(by a $120^\circ$ rotation about the origin), 
always keeping the new tile homeomorphic to a disc. 
This process shows that one edge of $T$ that lies on a reflection axis will
have length at least $x+1$, and at most one other edge of $T$ (lying on an adjacent reflection axis of $G$) 
will have this same length (see Fig.~\ref{p31mfighi}). 
From this, it follows that $T$ cannot have 3-fold rotation symmetry.

Suppose that $T$ has reflection symmetry. 
If the two straight edges of $T$ that lie on reflection
axes for $G$ have the same length, then the reflection that leaves $T$ fixed must map these edges to each other. 
But this reflection cannot leave $T$ invariant, since the other edges of $T$ are related by
a 3-fold rotation about the origin (see, for example, tiles 12-1, 12-2, 12-3 in Fig.~\ref{p31mfighi}). 
If the two straight edges of $T$ that lie on reflection axes for $G$ have different lengths, 
then a reflection that fixes $T$ must leave the longer edge fixed, 
and so is perpendicular to that edge. 
This can only happen if $T$ is the rhombus fundamental domain in Fig.~\ref{fig:p31m}, 
and in this case, the reflection symmetry for $T$ is not a symmetry for the tiling $\cal T$ 
(see tiling 12-4 in Fig.~\ref{p31mtiling}).
\end{proof}

\section{{\bf p3m1}}

The lattice of reflections and 3-fold rotation centers of a {\bf p3m1} symmetry group is shown 
in Fig.~\ref{fig:p3m1}; 
black, white, and grey circles denote three inequivalent 3-fold rotation centers. 
Here, unlike the {\bf p31m} case, 
all 3-fold centers lie on reflection axes. 
The shaded region bounded by reflection axes is a fundamental domain for the {\bf p3m1} group 
that generates the tiling by reflections those axes. 
\begin{figure}[h]
\centerline{
\psfig{file=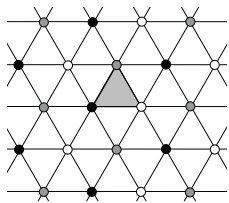,width=0.23\linewidth}}
\vspace*{8pt}
\caption{
The lattice of reflection axes and 3-fold rotation centers of a {\bf p3m1} group 
(glide-reflection axes are not shown). 
Three inequivalent 3-fold centers are black, grey, and white circles. 
Lines are reflection axes. 
The shaded region is a fundamental domain for the {\bf p3m1} group generated by reflections 
in the axes surrounding the region.
\label{fig:p3m1}
    }
\end{figure}
By observation (i) in section \ref{intro}, 
this is the only polyiamond tile possible having the area of a fundamental domain. 
But the full symmetry group of this tiling is type {\bf p6m}, 
which has a fundamental domain with area 1/6 that of the shaded tile. Thus,

\begin{theorem}
There are no {\bf p3m1} isohedral tilings having polyiamonds as fundamental domains.
\end{theorem}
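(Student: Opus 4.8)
The plan is to argue by contradiction. Suppose $T$ is a polyiamond that is a fundamental domain for a {\bf p3m1} isohedral tiling $\cal T$, and let $G$ be the generating {\bf p3m1} group. By observation (i) in section~\ref{intro}, no reflection axis of $G$ may cross the interior of $T$, so $T$ must lie inside a single chamber of the mirror arrangement, that is, inside one of the equilateral triangles cut out by the reflection axes (the shaded region of Fig.~\ref{fig:p3m1}). Since {\bf p3m1} is generated by the reflections in the three sides of such a chamber, the chamber is itself a fundamental domain for $G$, and so $T$ must have exactly its area. A subregion of the chamber having the full area of the chamber can only be the chamber itself; hence $T$ coincides with the shaded equilateral triangle. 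Pinning $T$ down to this single rigid shape is the step I regard as the crux.

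To make that crux convincing I would emphasize the contrast with the {\bf p31m} construction of section~\ref{p31m:create}, which explains why {\bf p3m1} behaves so differently. In {\bf p31m} there is a 3-fold center (the origin) lying off every reflection axis, so its fundamental domain is only one-third of a mirror chamber: two of its edges lie on reflection axes while the rest of its boundary is free, being permuted by the $120^\circ$ rotation about that off-axis center, and it was exactly this free boundary that produced a wealth of distinct tiles. In {\bf p3m1}, by contrast, every 3-fold center lies on a reflection axis, so there is no off-axis rotation center available to subdivide the chamber; the fundamental domain is forced to be the entire mirror-bounded triangle, with no free boundary to deform. The area computation above is the clean way to turn this qualitative difference into the rigorous statement that $T$ has no alternative but to be the chamber.

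It then remains to derive the contradiction. Reflecting the shaded equilateral triangle in its three edges generates the ordinary tiling of the plane by congruent equilateral triangles, whose full symmetry group $G'$ is of type {\bf p6m} and properly contains $G$. The stabilizer of the triangle inside $G'$ is its dihedral symmetry group of order $6$; its three lines of symmetry are now genuine reflection axes of $\cal T$, and cutting the triangle along them exhibits a fundamental domain for $G'$ of area one-sixth that of $T$. By Theorem~\ref{th1}, since $G$ is a proper subgroup of $G'$ there is a nontrivial element of $G'$ fixing $T$ together with a fundamental domain for $G'$ of strictly smaller area than $T$; indeed $T$ has six times the area of such a domain and so cannot be a fundamental domain for $\cal T$. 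This contradicts the hypothesis, so no {\bf p3m1} isohedral tiling has a polyiamond as a fundamental domain.
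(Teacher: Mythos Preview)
Your argument is correct and follows essentially the same route as the paper's: use observation~(i) to force the candidate tile to be the equilateral mirror chamber, then observe that the resulting tiling has full symmetry group {\bf p6m}, whose fundamental domain has one-sixth the area, so the triangle fails to be a fundamental domain. Your write-up is more detailed---in particular the explicit appeal to Theorem~\ref{th1} and the contrast with the {\bf p31m} case---but the underlying proof is the paper's own.
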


We note that if the shaded region in Fig.~\ref{fig:p3m1} is decorated with an asymmetric motif 
then there are $n$-iamonds (where $n = k^2$, $k$ a positive integer) having the shape of an equilateral triangle, 
for which the decorated triangle is a fundamental domain for a {\bf p3m1} isohedral tiling.

\section{{\bf p6}}
\label{p6}
\subsection{Creating polyiamonds as fundamental domains for {\bf p6} symmetry groups}
\label{p6:create}
A {\bf p6} symmetry group contains 6-fold, 3-fold, and 2-fold rotations; 
the subgroup generated by its 3-fold rotations is type {\bf p3}. 
Thus we begin as in section \ref{p3:create}, with a lattice of equilateral triangles,
to build $n$-iamond tiles that are fundamental domains for a {\bf p6} group. 
By observation (iii) in section \ref{intro}, 
the 6-fold and 3-fold rotation centers for a {\bf p6} symmetry group are located at lattice points. 
So first we place a 6-fold rotation center, a black circle, at a lattice point and call this the origin, then
place vectors $\mathbf u$ and $\mathbf v$ at the origin, along edges of a unit triangle. 
Next we place a 3-fold rotation center, 
a white circle, at $x{\mathbf u} + y{\mathbf v}$, 
where $x$ and $y$ are nonnegative integers, 
not both $0$. 
\begin{figure}[h]
\centerline{
\psfig{file=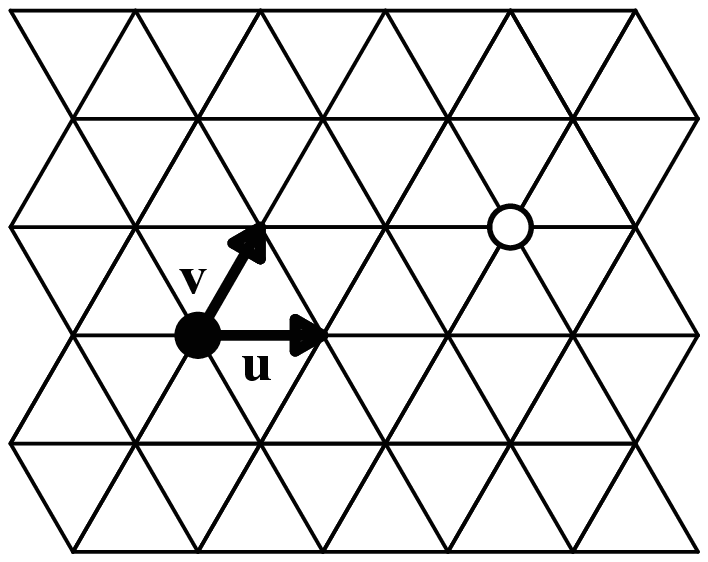,width=2.88cm}
\hspace{1em}
\psfig{file=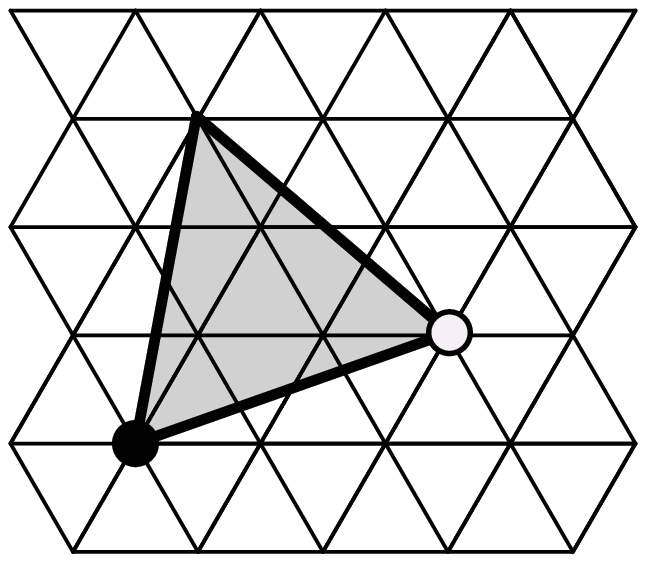,width=2.7cm}
}
\centerline{(a) \hspace{3cm} (b)}
\vspace*{8pt}
\caption{
(a) A lattice of unit triangles with a black 6-fold rotation center and white 3-fold rotation center that
generate a {\bf p6} group. 
Here $x = 2$, $y = 1$. (b). A standard fundamental domain for the {\bf p6} group generated by the
black and white rotation centers; 
in this example, the area of the fundamental domain is $7$ triangular units.
\label{p6lattice}
}
\end{figure}
See Fig.~\ref{p6lattice}. 
These choices of 6-fold and 3-fold rotation centers determine the whole {\bf p6} lattice of rotation
centers; 
rotations about the two chosen black and white centers generate the {\bf p6} group.

The shaded triangular region shown in Fig.~\ref{p6lattice}(b) is a standard fundamental domain for the {\bf p6}
group generated by 6-fold rotations about the black circle and 3-fold rotations about the white circle. 
Hence the area of a fundamental domain for this {\bf p6} group is given by
\begin{equation}
S=x^2+y^2+xy
\end{equation}
taking the area of a unit triangle as 1. 
Since we want our $n$-iamond to be a fundamental
domain, $n = S$. Therefore,
\begin{equation}
n=1, 3, 4, 7, 9, 12, 13, 16, 19, 21, 25, 27, 28, 31, 36, 37, 39, 43, 48, 49, 49, \ldots,
\label{p6:n}
\end{equation}  
where two pairs of $(x,y)$, namely, $(5,3)$ and $(7,0)$ correspond to $n = 49$.

The action of the {\bf p6} group classifies all unit triangles into $n$ equivalence classes and we denote
the equivalence class of a unit triangle $e$ as $C(e)$. 
We construct a set $\mathscr{T}_n$ of $n$-iamonds that are
fundamental domains for the given {\bf p6} group as in section~\ref{p3:create}, 
with $n$ chosen from the list in (\ref{p6:n}). 
Fig.~\ref{p6fighi} shows the set of inequivalent $n$-iamonds in $\mathscr{T}_n$
for $n \le 9$. 
From each of these $n$-iamonds we obtain the associated {\bf p6} tiling 
by using the black circles as 6-fold rotation centers
and white circles as 3-fold rotation centers. 
Fig.~\ref{p6tiling} shows the corresponding isohedral tilings 
produced by $n$-iamonds in Fig.~\ref{p6fighi}.
\begin{figure}%[!h]
\centerline{
\psfig{file=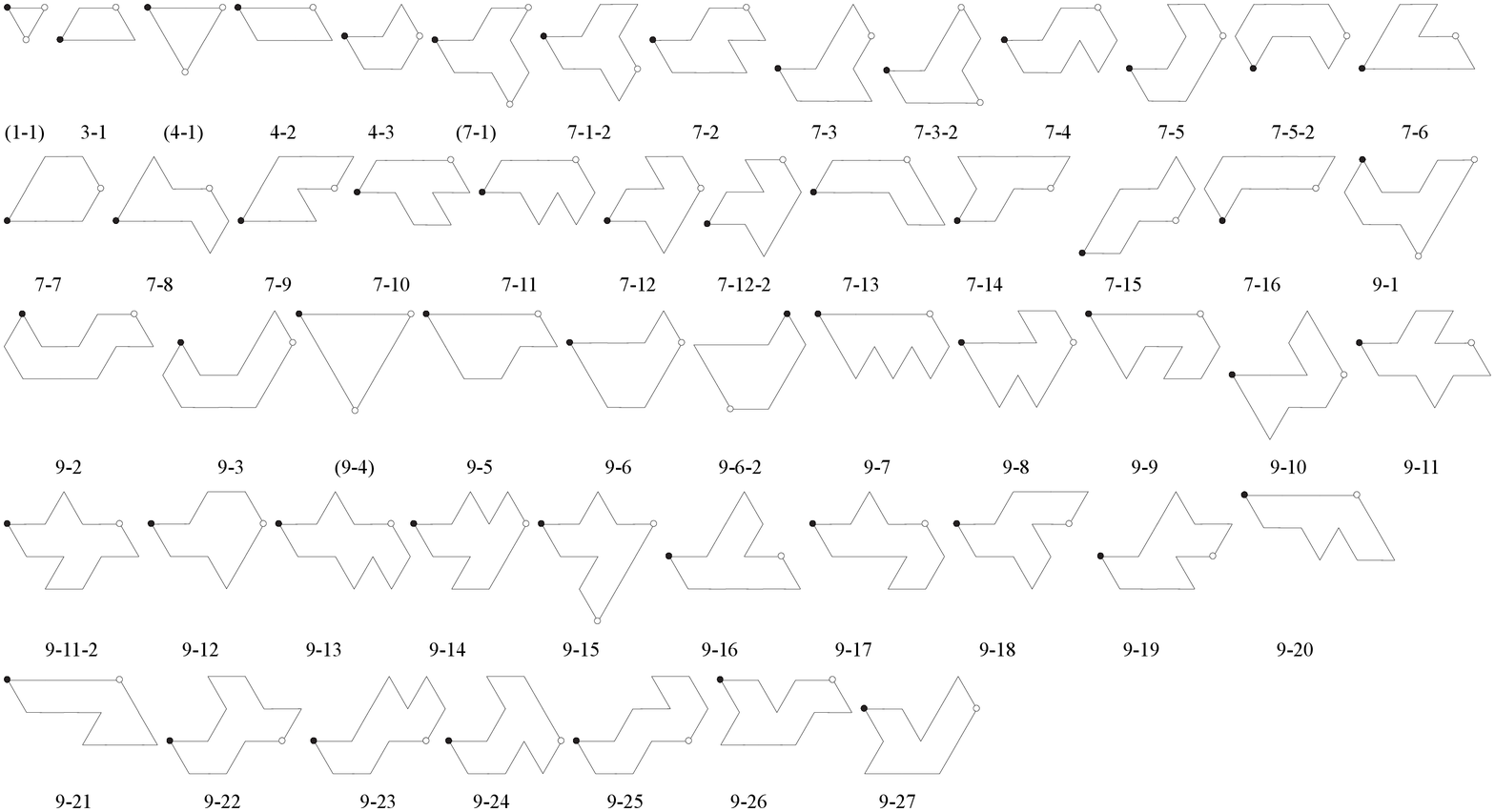,width=1.0\linewidth}}
\vspace*{8pt}
\caption{
List of $n$-iamonds produced as described in section \ref{p6:create} for $n \le 9$. 
The tiles are fundamental domains for the {\bf p6} group used to construct them. 
The labels indicate $n$ followed by the tile number for that $n$. 
Parentheses indicate that the tiles produce tilings having more symmetries than the {\bf p6} group that generates the tilings.
\label{p6fighi}
}
\end{figure}
\begin{figure}%[!h]
\centerline{
\psfig{file=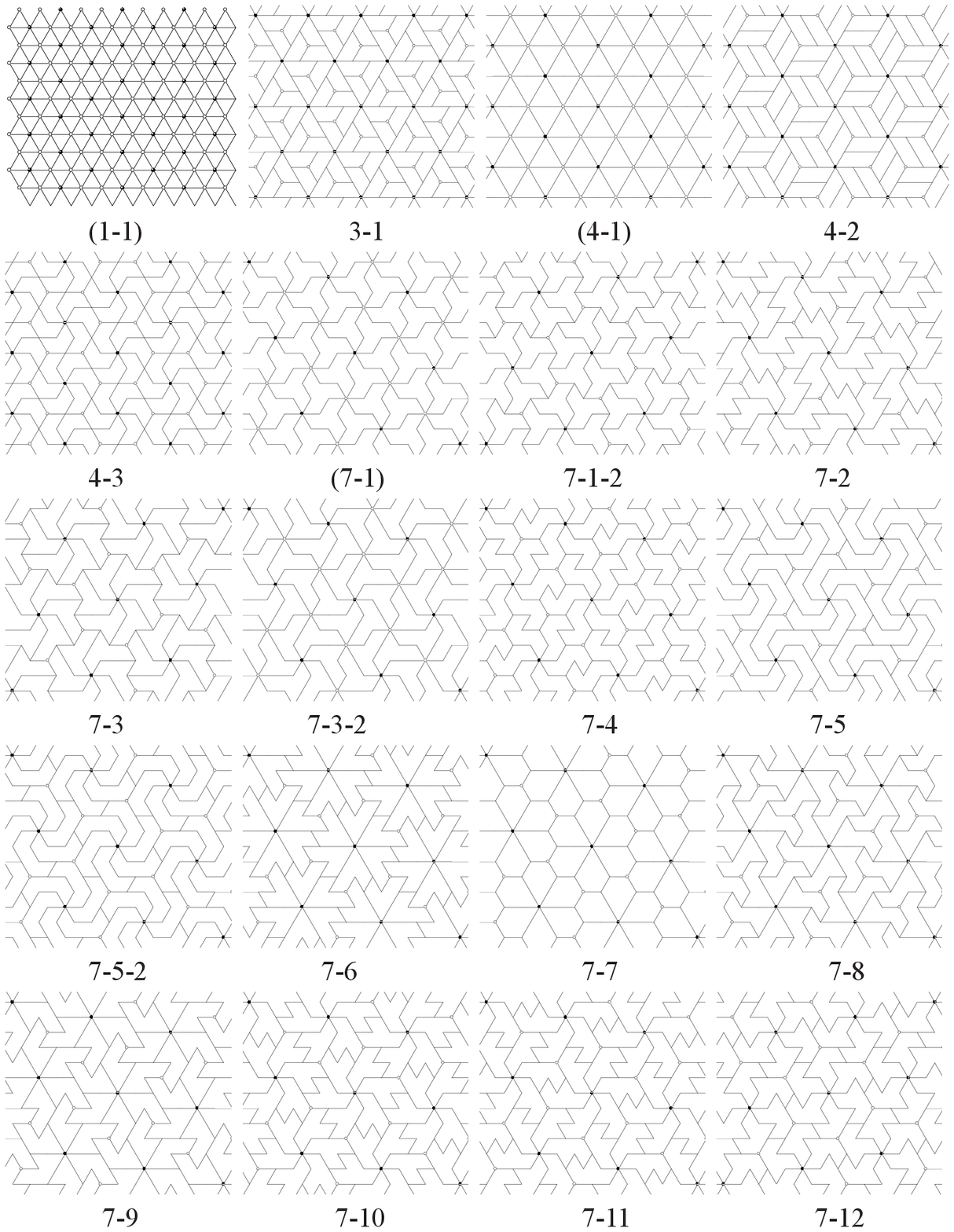,width=0.9\linewidth}}
\vspace*{8pt}
\caption{
List of {\bf p6} isohedral tilings by $n$-iamonds in Fig.~\ref{p6fighi} for $n \le 7$, 
generated by a given {\bf p6} group. 
Labels correspond to those in Fig.~\ref{p6fighi}. 
The symmetry group of each tiling is the given {\bf p6} group, 
except for tilings whose labels are in parentheses.
\label{p6tiling}
}
\end{figure}
\begin{figure}%[!h]
\centerline{
\psfig{file=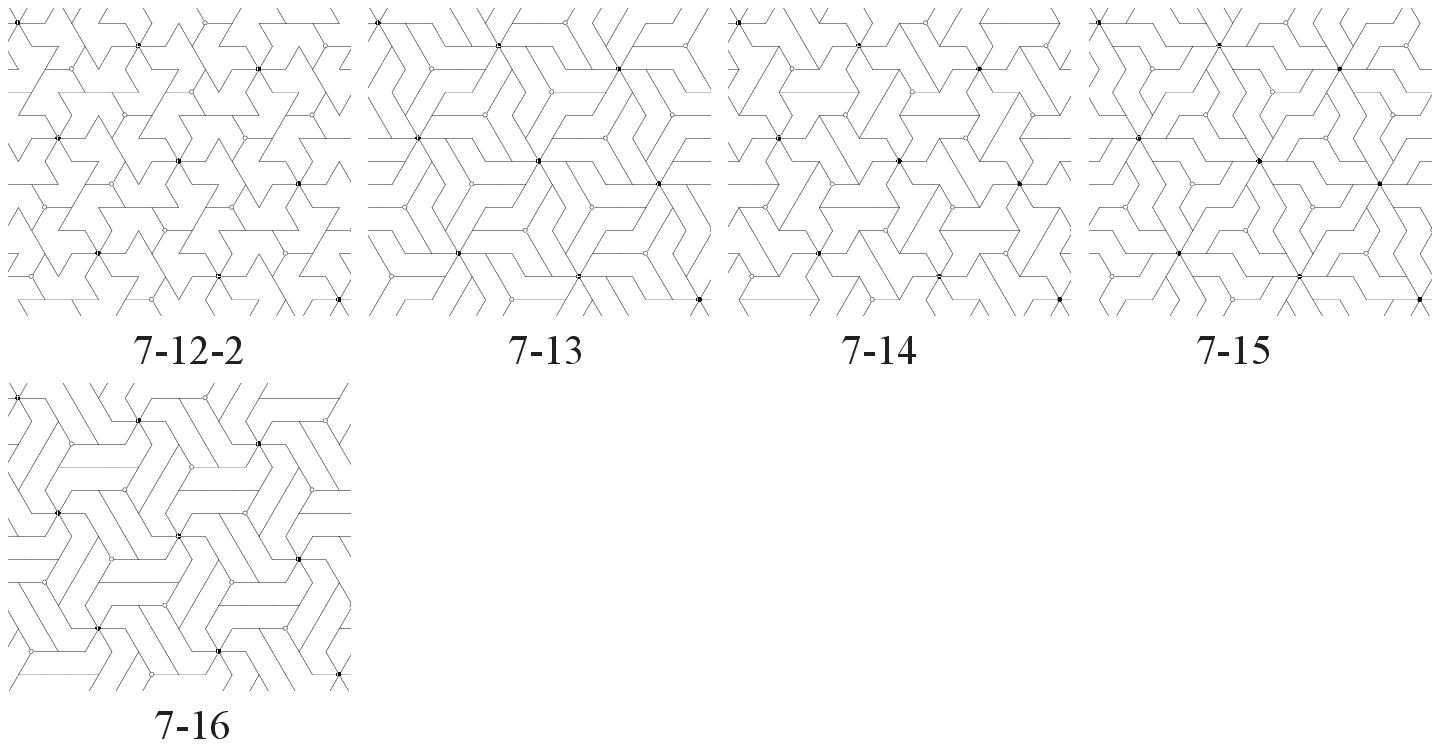,width=0.9\linewidth}}
\vspace*{8pt}
%\\
\fontsize{8pt}{0pt}\selectfont
    {\it Fig.~\ref{p6tiling}.} $(${\it Continued}$)$
\end{figure}

\subsection{Symmetries of tiles}
Some tilings in Fig.~\ref{p6tiling} have symmetries in addition to those in the {\bf p6} group $G$ 
that generated them. 
These tilings and their polyiamond tiles in Fig.~\ref{p6fighi} are indicated by parentheses 
in their labels. We can identify these polyiamonds as follows.
\begin{itemlist}
\item
Select a polyiamond that has rotation and/or reflection symmetry, and examine its tiling $\cal T$
generated by $G$.
\item
If the line connecting adjacent 6-fold and 3-fold centers (black to white) is a reflection axis
for $\cal T$, the tiling has {\bf p6m} symmetry. 
Tilings (1-1) and (4-1) in Fig.~\ref{p6tiling} have {\bf p6m} symmetry.
These tilings also have 3-fold centers for the tiling at the centers of the polyiamond tiles.
\item
Otherwise, look at all vertices and centers of unit triangles in a polyiamond in $\cal T$ except for
those black and white rotation centers in the {\bf p6} group we have generated and determine whether
or not they can be new 3-fold centers for $\cal T$. 
If new 3-fold centers are found, the full symmetry 
group of the tiling contains $G$ as a proper subgroup. 
Tiling (7-1) in Fig.~\ref{p6tiling} has a larger {\bf p6} symmetry group than $G$ 
since there are 3-fold rotation centers for $\cal T$ at the centers of the rotorlike polyiamond tiles. 
Note that the same tiles appear in tiling 7-1-2, but in this tiling, the centers
of the tiles are not 3-fold centers of rotation for $\cal T$.
\end{itemlist}

\section{{\bf p6m}}
\label{p6m}
Fig.~\ref{fig:p6m} shows the symmetry elements of a {\bf p6m} group. 
\begin{figure}[h]
\centerline{
\psfig{file=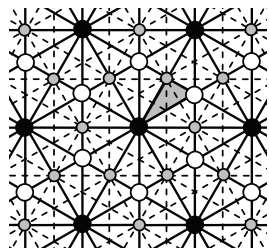,width=0.27\linewidth}}
\vspace*{8pt}
\caption{
The lattice of symmetry elements for {\bf p6m}. 
Solid lines are reflection axes and dashed lines are glide reflections axes. 
Black and white circles are 6-fold and 3-fold centers, respectively, and smaller grey circles where
two reflection axes intersect are 2-fold centers. The shaded region is a fundamental domain.
\label{fig:p6m}
}
\end{figure}
Black, white, and gray circles are the 6-, 3-, and 2-fold centers, respectively. 
Solid lines are reflection axes and dashed lines are glidereflection axes. 
The shaded $30^\circ$- $60^\circ$- $90^\circ$-triangle is a fundamental domain. 
A polyiamond that is a fundamental domain for the tiling must have its edges on the reflection axes, 
by observation (i) in section \ref{intro}. 
But this is clearly impossible. Thus,
\begin{theorem}
There are no {\bf p6m} isohedral tilings having polyiamonds as fundamental domains.
\end{theorem}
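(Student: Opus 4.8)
The plan is to convert the paper's terse remark (``this is clearly impossible'') into a rigorous argument by first pinning down the exact shape a {\bf p6m} fundamental domain is forced to take, and then showing that this shape is incompatible with being a union of unit triangles. The first step is to record the structural fact that {\bf p6m} is a \emph{reflection group}: it is generated by its reflections, and the reflection axes drawn in Fig.~\ref{fig:p6m} dissect the plane into congruent $30^\circ$-$60^\circ$-$90^\circ$ triangles, each of which is a fundamental domain for the group (the $30^\circ$ corner sitting at a black $6$-fold center, the $60^\circ$ corner at a white $3$-fold center, and the $90^\circ$ corner at a gray $2$-fold center). I would either cite this or verify it in coordinates.

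Next I would invoke observation (i) of section~\ref{intro}: a tile that is a fundamental domain can contain a reflection axis only on its boundary. Hence a polyiamond fundamental domain $T$ cannot cross any reflection axis of $G$, so $T$ lies in the closure of a single $30^\circ$-$60^\circ$-$90^\circ$ cell; since $T$ has exactly the area of a fundamental domain, it must coincide with that cell. This reduces the theorem to the purely geometric claim that a $30^\circ$-$60^\circ$-$90^\circ$ triangle cannot be a polyiamond.

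The contradiction then comes from the rigid combinatorial geometry of polyiamonds. Because $T$ is a union of unit equilateral triangles, each of its boundary edges lies in one of the three edge directions of the triangular lattice (angles $0^\circ$, $60^\circ$, $120^\circ$), and every corner angle of $T$ is a multiple of $60^\circ$. But the forced $30^\circ$-$60^\circ$-$90^\circ$ cell has an interior angle of $30^\circ$, which no polyiamond corner can realize; equivalently, two of its three sides lie on reflection axes running in the ``altitude'' directions $30^\circ$ and $90^\circ$, which are not edge directions of the lattice. Either feature alone is already forbidden, so no polyiamond can be a {\bf p6m} fundamental domain, completing the proof.

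I expect the only genuine obstacle to be the middle step --- justifying that $T$ must be \emph{exactly} one $30^\circ$-$60^\circ$-$90^\circ$ cell rather than merely some larger region bounded by reflection axes. This is precisely where the two hypotheses that $T$ is a true fundamental domain (minimal area) and that {\bf p6m} is a reflection group are both essential, in complete analogy with the proof of the corresponding theorem for {\bf p4m}, where the $45^\circ$-$45^\circ$-$90^\circ$ fundamental cell is likewise incompatible with the right-angled edge directions of a polyomino.
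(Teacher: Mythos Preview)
Your proposal is correct and follows essentially the same approach as the paper: both invoke observation~(i) to force the boundary of a polyiamond fundamental domain onto the reflection axes of the {\bf p6m} group, and then observe that the resulting $30^\circ$--$60^\circ$--$90^\circ$ cell is incompatible with polyiamond geometry. You have simply supplied the details behind the paper's ``this is clearly impossible.''
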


\section{Enumeration Tables}
\label{tables}
Tables 1--3 give 
the number of tiles and isohedral tilings that we have generated in sections \ref{p4}
through \ref{p6m}. 
$N_n$ is the number of inequivalent tiles $T$ in $\mathscr{T}_n$
for a given symmetry group $G$ of type 
{\bf p3}, {\bf p31m}, {\bf p3m1}, {\bf p4}, {\bf p4m}, {\bf p4g}, {\bf p6},
and {\bf p6m} that generate an isohedral tiling having
the $n$-omino or $n$-iamond tiles as fundamental domain. 
Tiles are equivalent only if they generate the same tiling by the action of the group $G$ 
when each tile is marked with an asymmetric motif. 
For example, the tiles 5-2 and 5-2-2 in Fig.~\ref{p4tiling} are congruent and their
corresponding tilings are the same, but the placement of their 4-fold centers is different, 
and so if the tiles are marked with an asymmetric motif, they generate different isohedral tilings.

Marking each tile with an asymmetric motif also guarantees that the group that generates the
tiling is the full symmetry group of the tiling. 
So $N_n$ is also the number of isohedral tilings
having $G$ as full symmetry group and having $n$-omino or $n$-iamond tiles as fundamental domain,
when each tile is marked with asymmetric motif.

$S_n$ is is the corresponding number of the $N_n$ tilings when the asymmetric motif of each tile is
removed. That is, $S_n$ is the number of isohedral tilings having full symmetry group $G$ and having
(unmarked) $n$-ominoes or $n$-iamonds as fundamental domains. 
In our figures that depict the
isohedral tilings for small values of $n$, these tilings do not have parentheses around their labels. 
These are the most important ``counting" results in this article. 
$N'_n$ is the number of non-congruent
tiles in the $N_n$ tilings, counted by ignoring rotation centers attached to tiles; similarly, 
$S'_n$ is the number of non-congruent tiles in the $S_n$ tilings.
\begin{table}[h]
\tbl{{\bf p3}, {\bf p31m} and {\bf p3m1}.}{
%\begin{center}
\begin{tabular}{lrrrr} 
\multicolumn{5}{l}{
{\bf p3} $n$-iamonds }\\
%\hline
\toprule
$n$    & 2 & 6 & 8 & 14 \\
%\hline
\colrule
$N_n$  & 1 & 4 & 7 & 306 \\
$S_n$  & 0 & 1 & 6 & 294 \\
$N'_n$ & 1 & 4 & 7 & 288 \\
$S'_n$ & 0 & 1 & 6 & 277 \\
\botrule
\end{tabular}
\hspace{1em}
\begin{tabular}{lrrr}
\multicolumn{4}{l}{
{\bf p31m} $n$-iamonds }\\
\toprule
$n$    & 3 & 12 \\
\colrule
$N_n$  & 1 & 20 \\
$S_n$  & 1 & 20 \\
$N'_n$ & 1 & 20 \\
$S'_n$ & 1 & 20 \\
\botrule
\end{tabular}
\hspace{1em}
\begin{tabular}{lrrrr}
\multicolumn{5}{l}{
{\bf p3m1} $n$-iamonds }\\
\toprule
$n$    & 1 & 4 & 9 & $\ldots$ \\
\colrule
$N_n$  & 1 & 1 & 1 & $\ldots$ \\
$S_n$  & 0 & 0 & 0 & $\ldots$ \\
$N'_n$ & 1 & 1 & 1 & $\ldots$ \\
$S'_n$ & 0 & 0 & 0 & $\ldots$ \\
\botrule
\end{tabular}
}
\label{tbl:p3}
\end{table}

\begin{table}[h]
\label{tbl:p4}
\tbl{{\bf p4}, {\bf p4g} and {\bf p4m}.}{
%
%\\[1em]
%\hspace{3em}
\begin{tabular}{lrrrrrrr}
\multicolumn{8}{l}{
{\bf p4} $n$-ominoes}\\
\toprule
$n$    & 1 & 2 & 4 & 5  & 8  & 9  & 10 \\
\colrule
$N_n$  & 1 & 1 & 3 & 12 & 45 & 82 & 300 \\
$S_n$  & 0 & 0 & 2 & 9  & 38 & 77 & 296 \\
$N'_n$ & 1 & 1 & 3 & 8  & 45 & 80 & 277 \\
$S'_n$ & 0 & 0 & 2 & 7  & 38 & 76 & 275 \\
\botrule
\end{tabular}
\hspace{1em}
\begin{tabular}{lrrrr}
\multicolumn{5}{l}{
{\bf p4g} $n$-ominoes}\\
\toprule
$n$    & 1 & 4 & 9 & 16    \\
\colrule
$N_n$  & 1 & 3 & 26 & 596  \\
$S_n$  & 0 & 2 & 25 & 595  \\
$N'_n$ & 1 & 3 & 26 & 596  \\
$S'_n$ & 0 & 2 & 25 & 595  \\
\botrule
\end{tabular}
\hspace{1em}
\begin{tabular}{lrrr}
\multicolumn{4}{l}{
{\bf p4m} $n$-ominoes}\\
\toprule
$n$    & 1 & 2 & $\ldots$ \\
\colrule
$N_n$  & 0 & 0 & $\ldots$ \\
$S_n$  & 0 & 0 & $\ldots$ \\
$N'_n$ & 0 & 0 & $\ldots$ \\
$S'_n$ & 0 & 0 & $\ldots$ \\
\botrule
\end{tabular}
}
\end{table}

\begin{table}[h]
\label{tbl6}
\tbl{{\bf p6} and {\bf p6m}.}{
%
%\\[1em]
%\hspace{3em}
\begin{tabular}{lrrrrrr}
\multicolumn{7}{l}{
{\bf p6} $n$-iamonds }\\
\toprule
$n$    & 1 & 3 & 4 & 7 & 9 & 12 \\
\colrule
$N_n$  & 1 & 1 & 3 & 20 & 29 & 195 \\
$S_n$  & 0 & 1 & 2 & 19 & 28 & 194 \\
$N'_n$ & 1 & 1 & 3 & 16 & 27 & 191 \\
$S'_n$ & 0 & 1 & 2 & 16 & 26 & 190 \\
\botrule
\end{tabular}
\hspace{1em}
\begin{tabular}{lrrrr}
\multicolumn{5}{l}{
{\bf p6m} $n$-iamonds }\\
\toprule
$n$    & 1 & 2 & 3 & $\ldots$ \\
\colrule
$N_n$  & 0 & 0 & 0 & $\ldots$ \\
$S_n$  & 0 & 0 & 0 & $\ldots$ \\
$N'_n$ & 0 & 0 & 0 & $\ldots$ \\
$S'_n$ & 0 & 0 & 0 & $\ldots$ \\
\botrule
\end{tabular}

}
\end{table}

For example, for symmetry group {\bf p6} and $n = 7$, 
there are 20 tiles in Fig.~\ref{p6fighi}, with
corresponding isohedral tilings in Fig.~\ref{p6tiling}, so $N_7 = 20$. 
Since tiling 7-1 of Fig.~\ref{p6tiling} has parentheses
around its label, $S_7 = 19$. 
From Fig.~\ref{p6fighi}, 
we can see that four pairs of tiles are congruent: 7-1 and
7-1-2; 7-3 and 7-3-2; 7-5 and 7-5-2; 7-12 and 7-12-2. 
Thus $N'_7 = 20 - 4 = 16$. 
Among the 19 tiles counted for $S_7$, there are also 16 non-congruent tiles, so $S'_7 = 16$.

\section{Summary}
We have described computer algorithms that can enumerate and display isohedral tilings by 
$n$-omino or $n$-iamond tiles for given $n$ in which the tiles are fundamental domains and the tilings
have 3-, 4-, or 6-fold rotational symmetry. 
Their symmetry groups are of types {\bf p3}, {\bf p31m}, {\bf p4}, {\bf p4g}, and {\bf p6}. 
We have shown that there are no isohedral tilings with symmetry groups of types
{\bf p3m1}, {\bf p4m}, or {\bf p6m} that have polyominoes or polyiamonds as fundamental domains. 
For symmetry groups of types {\bf p3}, {\bf p31m}, {\bf p4}, {\bf p4g}, 
and {\bf p6} we used the backtracking Procedure 1 to obtain a set $\mathscr{T}_n$ of $n$-omino 
or $n$-iamond tiles where each tile produced one isohedral tiling,
generated by a given symmetry group $G$ of one of these five types. 
We can denote $\mathscr{T}_{n}(G)$ as the set $\mathscr{T}_n$ for that symmetry group $G$ 
and $\mathscr{T}^{*}_{n}(G)$ the corresponding set of isohedral tilings.

We investigated the symmetries of tilings in the set $\mathscr{T}^{*}_{n}(G)$ and noted those tilings that
satisfy the following two conditions: 
(1) the full symmetry group of the tiling is $G$, and 
(2) the tiles are fundamental domains for $G$. 
We denote the subset of $\mathscr{T}^{*}_{n}(G)$ that satisfies (1) and (2)
as $\mathscr{S}^{*}_{n}(G)$ and the corresponding set of tiles as $\mathscr{S}_{n}(G)$. 
(For small values of $n$, these tiles and their tilings were displayed with labels without parentheses.) 
The enumeration of $\mathscr{S}^{*}_{n}(G)$ is the main counting result of this article. 
Although the $n$-omino or $n$-iamond tiles produced by our algorithm are not 
always fundamental domains for the isohedral tilings they generate, if we mark
these tiles with an asymmetric motif, then the set $\mathscr{T}^{*}_{n}(G)$ is the set of all isohedral tilings with
symmetry group $G$ in which the corresponding tiles in $\mathscr{T}_{n}(G)$ are fundamental domains. 
The set $\mathscr{T}^{*}_{n}(G)$ can then also include a marked fundamental domain for a {\bf p3m1} symmetry group. 
In Tables 1--3 of section \ref{tables}, we used the notation $N_n=\#\mathscr{T}^{*}_{n}(G)$ and 
$S_n=\#\mathscr{S}^{*}_{n}(G)$.

%In a separate paper\cite{fukuda2011}, 
%we continue these investigations for symmetry groups of
%In our next paper, we will continue these investigations for symmetry groups of 
%types {\bf pmm}, {\bf pmg}, {\bf pgg}, and {\bf cmm}.

\end{document}